\newcommand{\nn}{\nonumber}
\newcommand{\beq}{\begin{equation}}
\newcommand{\eeq}{\end{equation}}
\newcommand{\bqa}{\begin{eqnarray}}
\newcommand{\eqa}{\end{eqnarray}}
\newcommand{\erf}[1]{Eq.~(\ref{#1})}
\newcommand{\erfs}[2]{Eqs.~(\ref{#1}) and (\ref{#2})}
\newcommand{\erflist}[2]{Eqs.~(\ref{#1}) -- (\ref{#2})}
\newcommand{\dg}{^\dagger}
\newcommand{\bra}[1]{\langle{#1}|}
\newcommand{\ket}[1]{|{#1}\rangle}
\newcommand{\sch}{Schr\"odinger}
\newcommand{\cu}[1]{\left\{{#1} \right\}}
\newcommand{\tp}{^{\top}}
\newcommand{\s}[1]{\hat\sigma_{#1}}
\newtheorem{theorem}{Theorem}
\newtheorem{lemma}[theorem]{Lemma}
\newenvironment{proof}[1][Proof]{\noindent\textbf{#1.} }{\ \rule{0.5em}{0.5em}}
 \newcommand{\ea}{{\em et al.}}
\begin{document}

\title{Entanglement, EPR-correlations, Bell-nonlocality, and Steering}
 \author{S. J. Jones}
\affiliation{Centre for Quantum Computer Technology, Centre for
  Quantum Dynamics, Griffith University, Brisbane,
  4111 Australia}
 
\author{H. M. Wiseman}
\affiliation{Centre for Quantum Computer Technology, Centre for
  Quantum Dynamics, Griffith University, Brisbane,
  4111 Australia}  
  
  \author{A.\ C.\ Doherty} 
\affiliation{School of Physical Sciences,
  University of Queensland, Brisbane 4072 Australia}
  
 \newcommand{\red}{\color{red}}
\newcommand{\blk}{\color{black}}
  
  \date{\today}

\begin{abstract}
In a recent work [Phys. Rev. Lett. {\bf 98}, 140402 (2007)] we defined ``steering'', a type of quantum nonlocality that is logically distinct from both nonseparability and Bell-nonlocality. In the bipartite setting, it hinges on the question of whether Alice can affect Bob's state at a distance through her choice of measurement. More precisely and operationally, it hinges on the question of whether Alice, with classical communication, can convince Bob that they share an entangled state, under the circumstances that Bob trusts nothing that Alice says. We argue that if she can, then this demonstrates the nonlocal effect first identified in the famous EPR paper [Phys. Rev. {\bf 47}, 777 (1935)] as a universal effect for pure entangled states. This ability of Alice to remotely prepare Bob's state was  subsequently called steering by \sch, whose terminology we adopt. The phenomenon of steering has been largely overlooked, and prior to our work had not even been given a rigorous definition that is applicable to mixed states as well as pure states. Armed with our rigorous definition, we proved that steerable states are a strict subset of the entangled states, and a strict superset of the states that can exhibit Bell-nonlocality. In this work we expand on these results and provide further examples of steerable states.  We also elaborate on the connection with the original EPR paradox.
%
%The concept of steering was introduced by \sch\ in 1935 as a
%generalization of the EPR paradox for arbitrary pure bipartite
%entangled states and arbitrary measurements by one party. In a recent work [Phys. Rev. Lett. {\bf 98}, 140402 (2007)] we provided an operational definition of steering which showed that steerable states are a strict subset of the entangled states,
%and a strict superset of the states that can exhibit Bell-nonlocality (i.e. violation of a Bell inequality). In this work we expand 
%%expatiate 
%on these results and provide further examples of steerable states.  We also elaborate on the connection with the original EPR paradox.
   \end{abstract}

\pacs{03.65.Ud, 03.67.Mn }
%\pacs{130.102.131.21}- Steering?

\maketitle

\section{Introduction}\label{Intro}

Entanglement is arguably the central concept in the field of quantum information. 
However, there is an unresolved tension between different notions of what entanglement is,
even in the bipartite setting. %(although see Ref.~\cite{MasEtalARX07}).
On the one hand, entangled states are defined as those that cannot be created from factorizable states
using local operations and classical communication (LOCC). On the other hand, entanglement
is regarded as a resource that enables the two parties to perform interesting or (in more recent times) 
useful nonlocal protocols. For pure states, which were the only states considered in this context 
for many decades, these notions coincide, and the word 
``entangled'' (introduced by \sch\ \cite{SchPCP35}) is identical with ``factorizable''.

The first authors to identify an interesting nonlocal effect associated with unfactorizable states
were Einstein, Podolsky and Rosen (EPR) in 1935 \cite{EinEtalPR35}. 
They considered a general unfactorizable pure state of two systems, held by two distant 
parties (say Alice and Bob)\footnote{All we have 
changed from EPR's presentation is to use Dirac's notation rather than wave functions.}:
\beq \label{psient} \ket{\Psi} =
\sum_{n=1}^{\infty} c_n\ket{u_n}\ket{\psi_n} = \sum_{n=1}^{\infty}
d_n\ket{v_n}\ket{\varphi_n}. \eeq 
Here $\cu{ \ket{u_n}}$ and $\cu{
\ket{v_n}}$ are two different orthonormal bases for Alice's system . 
If such states exist, then if 
Alice chose to measure in the $\cu{\ket{u_n}}$ (respectively $\cu{\ket{v_n}}$)
basis, then  she would instantaneously collapse Bob's system  into
one of the states $\ket{\psi_n}$ (respectively $\ket{\varphi_n}$). 
That is, ``as a consequence of two different measurements performed upon the first
system, the second system may be left in states with two different wavefunctions.'' \cite{EinEtalPR35}.
%If quantum mechanics (QM) gave a complete description of the world, 
%the one immediately sees a problem, which EPR identify:  
Now comes the paradox: ``the two systems no
longer interact, [so] no real change can take place in [Bob's]
system in consequence of anything that may be done to [Alice's]
system.''\cite{EinEtalPR35} That is, if Bob's quantum state is the real state of his system, 
then Alice  cannot choose to make it collapse into either one of the $\ket{\psi_n}$ 
or one of the $\ket{\phi_n}$ because that would violate 
local causality. Note that it is crucial to consider more than one sort of measurement for Alice; 
if Alice were restricted to measuring in one basis (say the $\ket{u_n}$ basis),
then it would be impossible to demonstrate any ``real change'' in Bob's system, because
she might know beforehand which of the $\ket{\psi_n}$ is the real state of his system.
 That is, the paradox exists only if there is not a local hidden state (LHS) model 
for Bob's system, in which the real state $\ket{\psi_n}$ is hidden from Bob but may be known
to Alice. 

As the above quotations show, EPR assumed local causality to be a true feature of the 
world; indeed, they say that no ``reasonable'' theory could be expected to permit otherwise.  
They thus concluded that the wavefunction cannot describe reality; that is,  the quantum mechanical (QM) description must be {\em incomplete}. 
%\footnote{Not, as Bohr \cite{Boh35} seemed to think, that it was {\em incorrect}.
%See Ref.~\cite{WisCP06} and references within.}.  
Their intuition was thus that local causality could be maintained by completing QM. 
This intuition was supported by the famous example that they then presented as a special case of \erf{psient}, involving
a bipartite entangled state with perfect correlations in position and momentum. 
The ``EPR paradox'' in this case is trivially resolved by considering 
local hidden variables (LHVs) for position and momentum.

 Although the argument of EPR against the completeness of QM was correct, 
their intuition was not.  As proven 
by Bell \cite{BelPHY64,Bel71}, local causality cannot be maintained even if one 
allows QM to be completed by hidden variables. That is, assuming as always 
(and with good justification \cite{HarCP98}) that 
QM is correct, Bell's theorem proves that local causality is not a true feature of the world\footnote{For both Bell and EPR, there is an escape, ``by denying independent real situations 
as such to things which are spatially separated from each other,'' as stated by Einstein in 1946 
\cite{Ein46}. That is, Alice, for example, can refuse to admit the reality of Bob's measurement results
until she observes them, by talking only about the outcomes of her own future measurements. However Einstein 
 stated that in his opinion this anti-realism was ``equally unacceptable'' as violating local causality; 
 see Ref.~\cite{WisCP06} for a discussion.}.
Interestingly, any unfactorizable pure state can be used not only to demonstrate the 
EPR paradox \cite{EinEtalPR35}, but also to demonstrate Bell-nonlocality (that is, the violation of local causality). 
(This fact was perhaps first stated in 1989 by Werner~\cite{WerPRA89}; the first detailed proof was given 
in 1991 by Gisin \cite{GisPLA91}; see also \cite{PopRohPLA92}.)

With the rise of quantum information {\em experiment}, 
the idealization of considering only pure states has become untenable. 
The question of which {\em mixed} states were Bell-nonlocal 
(that is, allowed a demonstration of Bell-nonlocality) was first addressed by Werner \cite{WerPRA89}, %in a foundational paper in the field of quantum information theory. 
in a foundational paper pregnant with implications for, and applications in, quantum information  science. Revealing the first hint of the complexity of mixed-state entanglement, 
still being uncovered \cite{MasEtalARX07},
Werner showed that not all mixed entangled states can demonstrate Bell-nonlocality.  Here,
for mixtures, an entangled state is defined as one which cannot be written as a mixture of factorizable pure states. Indeed Werner's paper is often cited as that which 
introduced this definition. That is, he is credited 
with introducing the dichotomy of entangled states versus separable (i.e. locally preparable) states. However it is interesting to note that he used neither the 
term entangled nor the term separable. For a discussion of the history of terms 
used in this context, and their relation to the present work, see Appendix~\ref{App:history}.

In a recent Letter, the present authors also considered the issue of mixed states and nonlocality \cite{WisEtalPRL07}.  We rigorously defined
the class of states that can be used to demonstrate the nonlocal effect which EPR identified in 1935. 
We proposed the term ``steerable''
for this class of states (for reasons given in Appendix~\ref{App:history}), 
and proved that the set of Bell-nonlocal states is a strict subset
of the set of steerable states, which in turn is a strict subset of the set of nonseparable states. This was our main result.

Like ``entangled'', ``steering'' is a term introduced by \sch\ \cite{SchPCP35} in the aftermath of the EPR paper. Specifically, he credits EPR with calling attention to ``the obvious but very disconcerting fact'' that for a pure entangled state like \erf{psient}, 
Bob's system can be 
``steered or piloted into one or the other type of state at [Alice's] mercy in spite of [her] having no access to it.'' He referred to this 
as a ``paradox'' \cite{SchPCP35,SchPCP36} because if such states can exist, and if the QM description is complete, then local causality must be violated.

%\bibitem{Pri84}
%[SEE Werner's first reference for this].

In Ref.~\cite{WisEtalPRL07} we first supplied an operational definition of steering in the style of 
a quantum informational task involving two parties (in contrast to demonstrating Bell-nonlocality, 
which can be defined as a task involving three parties). Next we turned this operational definition into a mathematical
definition. Applying this to the case of $2\times 2$ dimensional Werner states enabled us to establish our main result, quoted above. We then completely characterized steerability for $d\times d$-dimensional Werner states and isotropic states. Finally, we completely characterized the Gaussians states that are 
steerable by Gaussian measurements, and related this to the Reid criterion \cite{ReiPRA89} for the EPR paradox. 

In the present paper we expand and extend the material in Ref.~\cite{WisEtalPRL07}. 
In Sec.~II we present the operational definitions of Bell-nonlocality and steering as before, and
also that for demonstrating nonseparability. In addition we use these operational definitions to show that they lead to a hierarchy of states: Bell-nonlocal within steerable within nonseparable. 
In Sec.~III we turn our operational definitions into mathematical definitions, and in addition we 
explain how our definition of steering conforms to \sch's use of the term.
In Sec.~IV we derive  conditions for steerability for four 
families of states. As before, we consider
 Werner states, isotropic states and Gaussian states, but here we expand the proofs for the benefit of the reader.
 In addition, we consider another class of states: the ``inept states'' of Ref.~\cite{JonEtalPRA05}.  We also consider a 
 subclass of Gaussian states in more detail: the symmetric two-mode states produced in parametric down-conversion. 
 %We identify identities between certain Werner (or isotropic) states and certain inept states, and between certain 
% inept states and certain symmetric two-mode states. 
 We conclude with a summary and discussion in Sec.~V.

\section{Operational Definitions}
It is useful to begin with some operational definitions for the different properties of quantum states that we wish to consider.  This is useful for a number of reasons.  First, it presents the ideas that we wish to discuss in an accessible format for those familiar with concepts in modern quantum information.  Second, it allows us to present an elementary proof of the hierarchy of the concepts (we will present a more detailed proof of this hierarchy in subsequent sections).
%motivation for operational definitions. a) in the style of modern quantum information.
%b) allows elementary proof of hierarchy of concepts.

First, let us define the familiar concept of Bell-nonlocality
\cite{BelPHY64} as a task, in this case with three parties; Alice, Bob and Charlie. Alice
and Bob can prepare a shared bipartite state, and repeat this any
number of times. Each time, they measure their respective parts.
Except for the preparation step, communication between them is
forbidden (this prevents them from colluding in an attempt to fool Charlie). Their task is to convince Charlie (with whom they can
communicate) that  the state they can prepare is entangled. Charlie
accepts QM as correct, but trusts neither Alice nor Bob. If the
correlations between the results they report {\em can} be
explained by a LHV model, then Charlie will not be convinced that the state is entangled; the
results could have been fabricated from shared classical randomness. 
Conversely, if the correlations {\em cannot} be so explained then the state must be entangled.
Therefore they will succeed in their task \emph{iff} (if and only if) they
can demonstrate Bell-nonlocality. This task can be thus considered as an operational definition of violating a Bell inequality.

The analogous definition for steering uses a task with only two parties.
Alice can prepare a bipartite quantum state and
send one part to Bob, and repeat this any number of times. Each time,
 they measure their respective parts, and communicate classically.
 Alice's task is to convince Bob that the state she can prepare is
 entangled. Bob (like \sch) accepts that QM describes the results
of the measurements he makes (which, we assume, allow him to do
local state tomography). However Bob does not trust Alice. 
 In this case Bob must determine whether the correlations between his 
local state and Alice's reported results are proof of entanglement. 
 How he should determine this is explained in detail in Sec.~III,
but the basic idea is that he should not accept the correlations as proof of
entanglement if they can be explained by a LHS model for Bob. 
If the
correlations between Bob's measurement results and the results Alice
reports {\em can} be so  explained then Alice's results could have been fabricated from
her knowledge of Bob's LHS in each run. Conversely, if the correlations
{\em cannot} be so explained then the
bipartite state must be entangled. Therefore we say that Alice
will succeed in her task \emph{iff} she can steer Bob's state.

Finally the simplest task is for Alice and Bob to determine if a bipartite quantum state that they share is nonseparable. In this case they can communicate results to one another, they trust each other, and they can repeat the experiment sufficiently many times to perform state tomography. By analyzing the reconstructed bipartite state, they could determine whether it is nonseparable. That is, whether it can be described by correlated LHSs for Alice and Bob.  Because Alice and Bob trust each other and can freely communicate, this is really a \emph{one} party task.

Using these operational definitions we can show that Bell nonlocality is a stronger concept than steerability. That is, that Bell nonlocal states are a subset of the steerable states. 
%Based on these operational definitions one may wonder which concept specifies the stronger condition.  By which we mean, is one of these classes of states a subset of the other? One may intuitively expect Bell nonlocality to be a stronger concept than steerability; that is, that Bell nonlocal states are a subset of steerable states.  This is indeed the case, as can be seen as follows.  
The operational definition of Bell nonlocality is based on three parties and requires a completely distrustful Charlie.  If we \emph{weaken} %\footnote{It is somewhat subjective to say that the condition is weakened by allowing one party to be trusted.  However, in a rigorous formulation of operational conditions one would expect that no trust be necessary.  A discerning experimenter could test the condition based on the information available to them with no caveats about trusting its origins. With this in mind, reformulating the condition by including the necessity of trusting one party can reasonably be claimed as a weaker condition.} 
this condition by allowing Charlie to trust Bob completely, we arrive at the following situation.  Charlie can now in principle do state tomography for Bob's local state (as he believes everything told to him by Bob), and he only distrusts the measurement results reported by Alice.  In this case, he will only concede that the state prepared by Alice and Bob is entangled if the state is steerable. Thus it is possible to arrive at the operational definition for steering by weakening the operational definition for Bell nonlocality. Thus, the Bell nonlocal states are a subset of the steerable states. %By doing so, we allow a wider class of states to satisfy the condition and we see immediately that the steerable states are a superset of Bell nonlocal states (that is, Bell nonlocal states are a subset of steerable states).

Similarly, if we weaken the condition for steerability we arrive at the condition for nonseparability as follows.  In this case we weaken the condition by allowing for Bob to trust Alice completely.  Since Bob now has access to the measurement information for both subsystems (as he believes everything told to him by Alice) he can in principle perform state tomography. Clearly in this situation Bob will only concede that they share an entangled state if the state that Alice prepares really is entangled. %Thus it is possible to arrive at the operational definition for nonseparability by weakening the operational definition for steerability.
Thus, the steerable  states are a subset of the entangled states. We illustrate these relations graphically in Fig. 1.

\begin{figure}
\begin{center}
\includegraphics[width=8.5cm]{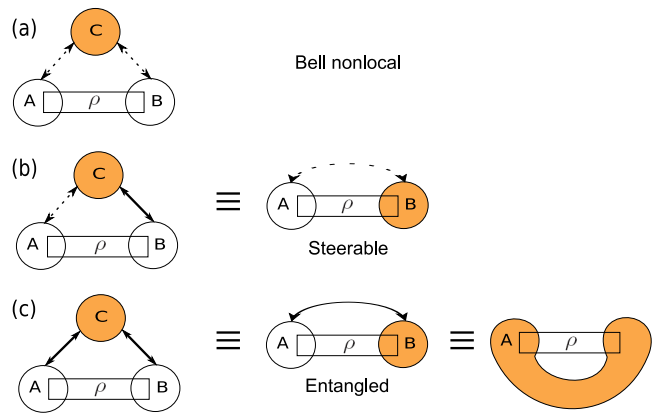}%.pdf
\end{center}
%\vspace{-5ex}
\caption{Operational definitions for classes of entangled states.  Bell nonlocal states (a) can be defined via a three-party task (involving Alice [A], Bob [B] and Charlie [C]). Steerable states (b) may be defined using a two-party task. Defining an entangled state (c) essentially requires only one party. In all cases shading indicates the skeptical party, dotted arrows indicate two-way communication, and solid arrows indicate trust and two-way communication.}
\label{Tasks}\end{figure}

While these operational definitions give a good insight into the relationships between the three classes of states it is also desirable to have a strict mathematical way to define the classes.  We present such definitions in the following section.

%For consistency we will begin with the three party case, but we now allow Charlie to trust \emph{both} Bob and Alice (this is equivalent to the two party case and allowing Alice and Bob to trust each other - then either party could verify the entanglement in the shared state).  Now Charlie has access to the measurement information for both subsystems (as he believes everything told to him by Alice and Bob) and can in principle perform state tomography. Clearly in this situation Charlie will only concede that Alice and Bob share an entangled state if the state they prepare really is entangled. Thus it is possible to arrive at the operational definition for nonseparability by weakening the operational definition for steerability. Thus, the steerable  states are a subset of the entangled states.

\section{Mathematical Definitions}
First, we define some terms.
Let the set of all observables on the
Hilbert space for Alice's system be denoted ${\mathfrak D}_\alpha$.
We denote an element of ${\mathfrak D}_\alpha$ by $\hat{A}$, and
the set of eigenvalues $\cu{a}$ of $\hat{A}$ by $\lambda(\hat A)$.
By $P(a|\hat{A};W)$ we mean the probability that Alice will obtain
the result $a$ when she measures $\hat{A}$ on a system with state
matrix $W$. We denote the measurements that Alice is able to perform by
the set  ${\mathfrak M}_\alpha \subseteq {\mathfrak D}_\alpha$. Note that, following Werner \cite{WerPRA89}, we are restricting to
projective measurements.  The corresponding notations
for Bob, and for Alice and Bob jointly, are obvious. Thus, for example, \beq
P(a,b|\hat{A},\hat{B};W) = {\rm Tr}[(\hat\Pi_a^A\otimes
\hat\Pi_b^B) W], \eeq where $\hat\Pi_a^A$ is the
projector satisfying $\hat{A}\hat\Pi_a^A =
a\hat \Pi_a^A$.

The strongest sort of nonlocality in QM is
Bell-non\-locality \cite{BelPHY64}. This is a property of entangled states which violate a Bell inequality.  This is exhibited in
an experiment on state $W$ \emph{iff} the correlations between $a$ and $b$
cannot be explained by a LHV model. That is, if it is
{\em not} the case that for all $a\in \lambda(\hat A), b \in
\lambda(\hat B)$,  for all $\hat{A} \in {\mathfrak M}_\alpha,
\hat{B} \in {\mathfrak M}_\beta$, we have \beq \label{Bell-local1}
P(a,b| \hat A,\hat B;W) = \sum_\xi \wp(a|\hat A,\xi) \wp (b|\hat
B,\xi) \wp_\xi. \eeq Here, and below, $\wp(a|\hat A,\xi)$,
$\wp(b|\hat B,\xi)$ and $\wp_\xi$ denote some (positive,
normalized) probability distributions, involving the LHV $\xi$. We
say that a {\em state} is Bell-nonlocal \emph{iff} there exists a measurement set
${\mathfrak M}_\alpha\times{\mathfrak M}_\beta$ that allows Bell-nonlocality
to be demonstrated. If \erf{Bell-local1} is always satisfied we say $W$ is Bell-local.

A strictly weaker \cite{WerPRA89} concept is that of %quantum
nonseparability or
entanglement. A nonseparable state is one that
{\em cannot} be written as \beq W = \sum_\xi
{\sigma}_\xi \otimes \rho_\xi \,\wp_\xi.\eeq Here, and below,
$\sigma_\xi \in {\mathfrak D}_\alpha$ and $\rho_\xi \in  {\mathfrak D}_\beta$ are
some (positive, normalized) quantum states. We can also give an operational
definition, by allowing 
Alice and Bob the ability to measure a quorum
of local observables, so that they can
reconstruct the state $W$ by tomography  \cite{DArEtalJPA01}. Since the
complete set of observables ${\mathfrak D}$ is obviously a quorum, we
can say that a
state $W$ is nonseparable %exhibits nonseparability
\emph{iff} it is {\em
not} the case that for all $a\in \lambda(\hat A), b \in
\lambda(\hat B)$, for all $\hat{A} \in {\mathfrak D}_\alpha,
\hat{B} \in {\mathfrak D}_\beta$, we have \beq \label{separable1}
P(a,b| \hat A,\hat B; W) = \sum_\xi P(a|\hat{A}; \sigma_\xi)
P(b|\hat{B};\rho_\xi) \wp_\xi. \eeq

Bell-nonlocality and nonseparability are both concepts that are
symmetric between Alice and Bob. However {\em steering}, \sch's term
for the 
EPR effect \cite{SchPCP35}, is inherently asymmetric.
It is about whether Alice, by her choice of measurement $\hat{A}$, can
collapse Bob's system into different types of states in the
different ensembles $E^A \equiv \cu{\tilde\rho^A_a:a\in
\lambda(\hat{A})}$. Here $\tilde{\rho}^A_a \equiv {\rm Tr}_\alpha[W (\hat\Pi_a^A\otimes {\bf I})] \in {\mathfrak D}_\beta$, is Bob's state conditioned on Alice measuring $\hat{A}$ with result $a$. The tilde denotes that this state is unnormalized (its norm is
the probability of its realization). Of course Alice
cannot affect Bob's unconditioned state
$\rho = {\rm Tr}_\alpha[W] = \sum_a \tilde \rho^A_a$
--- that would allow super-luminal signaling.
Despite this,
steering is clearly nonlocal if one believes that the state of a quantum 
system is a physical property of the system, as did \sch. This is apparent from his statement that
``It is rather discomforting that the theory should allow a system to be steered or piloted into one or the other type of state at the experimenter's mercy in spite of his having no access to it.''

As this quote also shows, \sch\ was not wedded to the terminology ``steering''. He also used the term  ``control''  for this phenomenon \cite{SchPCP36}, and the word ``driving'' in the context of his 1936 result that ``\ldots a sophisticated experimenter can \ldots produce a non-vanishing probability of driving the system into any state he chooses'' \cite{SchPCP36}. (By this he means that if a bipartite system is in a pure entangled state, then one party (Alice) can, by making a suitable measurement on her subsystem, create   any pure quantum state $\ket{\psi}$ for Bob's subsystem with probability $\bra{\psi}\rho^{-1}\ket{\psi}^{-1}$, whenever this is well-defined \cite{SchPCP36}.)   He regarded steering or driving as a ``necessary and indispensable feature'' of quantum mechanics \cite{SchPCP36}, but found it ``repugnant'', and doubted whether it was really true. That is, he was ``not satisfied about  there being enough experimental evidence for [its existence in Nature]'' \cite{SchPCP36}.

What experimental evidence would have convinced \sch? The pure entangled states he discussed are an idealization, so we cannot expect ever to observe precisely the phenomenon he introduced. On the other hand, \sch\  was  quite explicit that a separable but correlated state which allows ``determining the state of the first system by {\em suitable} [his emphasis] measurement of the second or {\em vice versa}'' could never exhibit steering. Of this situation, he says that  ``it would utterly eliminate the experimenter's  influence on the state of that system which he does not touch.''  Thus it is apparent that by steering \sch\ meant something that could not be explained by Alice simply finding out which state Bob's system is in, out of some predefined ensemble of states.  In other words, the  ``experimental evidence''   \sch\ sought is precisely the evidence that would
convince Bob that Alice has prepared an entangled state under the
conditions described in our first (operational) definition of
steering.

To reiterate, we assume that the experiment can be
repeated at will, and that Bob can do state tomography. Prior to all
experiments, Bob demands that Alice announce the possible ensembles
$\cu{E^A:\hat{A} \in {\mathfrak M}_\alpha}$ she can steer Bob's state into. In
any given run (after he has received his state), Bob should
randomly pick an ensemble $E^A$, and ask Alice to prepare it\footnote{This ensures that Bob need not trust Alice that they share the same
state $W$ in each run, because Alice gains nothing by preparing
different states in different runs, because she
never knows what ensemble Bob is going to ask for.}. 
Alice should then do so, by measuring $\hat{A}$ on her
system, and announce to Bob the particular member
$\rho^A_a$ she has prepared. Over many runs, Bob can verify that
each state announced is indeed produced, and is announced
with the correct frequency ${\rm Tr}[\tilde \rho^A_a]$.

If Bob's system did have a pre-existing LHS $\rho_\xi$, (as \sch\ thought),
then Alice could attempt to fool Bob, using her knowledge of $\xi$.
This state would be drawn at random from some prior ensemble of LHSs $F =
\cu{\wp_\xi\rho_\xi}$ with $\rho = \sum_\xi
\wp_\xi\rho_\xi$. Alice would then have to announce a LHS $\tilde{\rho}_a^A$
based on her knowledge of $\xi$,
according to some stochastic map from $\xi$ to $a$. %$\wp(a|A,\xi)$.
Alice will have failed to convince Bob that she can steer his system if, for all  $\hat{A} \in {\mathfrak M}_\alpha$, and for all
$a\in \lambda(\hat{A})$, there exists an ensemble $F$ and a stochastic map
$\wp(a|\hat{A},\xi)$
from $\xi$ to $a$
such that
\beq \label{steering2}
\tilde{\rho}_a^A = \sum_\xi  \wp(a|\hat{A},\xi) \rho_\xi  \wp_\xi.
\eeq
That is, if there exists a  {\em coarse-graining} of ensemble $F$ to ensemble $E^A$ then Alice may simply know Bob's pre-existing state $\rho_\xi$.
Conversely,  if Bob {\em cannot} find any ensemble $F$ and map $\wp(a|\hat{A},\xi)$ satisfying \erf{steering2} then Bob must admit that Alice can steer his system.

%If Bob is less skeptical and trusts Alice to choose
%measurements at random, then
We can recast this definition as a `hybrid' of \erfs{Bell-local1}{separable1}: Alice's
 measurement strategy ${\mathfrak M}_\alpha$ on state $W$ exhibits steering \emph{iff}
 it is {\em not} the case that
for all $a\in \lambda(\hat A), b \in \lambda(\hat B)$,  for all $\hat{A} \in {\mathfrak M}_\alpha, \hat{B} \in {\mathfrak D}_\beta$,  we can write
\beq P(a,b| \hat A,\hat B; W) = \sum_\xi
\wp(a|\hat{A},\xi)P(b|\hat{B};\rho_\xi) \wp_\xi.
\label{Steer}\eeq That is, if the joint probabilities for Alice and Bob's measurements can be explained using a LHS model for Bob and a LHV model for Alice correlated with this state, then we have failed to demonstrate steering.
\emph{Iff} there exists a measurement
strategy ${\mathfrak M}_\alpha$ that exhibits steering, we say that the state
$W$ is {\em steerable} (by Alice).

It is straightforward to see that the condition for no steering implies the condition for Bell-locality, since if there is a
model with $P(b|\hat{B},\rho_\xi)$ satisfying \erf{Steer}, then there is a model with $\wp(b|\hat{B},\xi)$ that satisfies \erf{Bell-local1}; simply make $\wp(b|\hat{B},\xi)=P(b|\hat{B};\rho_\xi)$ for all $\hat{B},\xi$. Since no steering implies no Bell nonlocality, we see that if a state is Bell nonlocal, then it implies that it is also steerable.
Hence Bell nonlocality is a stronger concept than steerability.
% That is, the Bell nonlocal states are a subset of the steerable states.

Similarly, the condition for separability implies the condition for no steering. If there is a model with $P(a|\hat{A};\sigma_\xi)$ satisfying \erf{separable1}, then there is a model with $\wp(a|\hat{A},\xi)$ that satisfies \erf{Steer}; simply make $\wp(a|\hat{A},\xi)=P(a|\hat{A};\sigma_\xi)$ for all $\hat{A},\xi$. Thus, steerability is also a stronger concept than nonseparability.  At least one
of these relations must be ``strictly stronger than'', because
Bell-nonlocality is strictly stronger than nonseparability
\cite{WerPRA89}.
In the following sections we prove
that in fact steerability is strictly stronger than
nonseparability, {\em and} strictly
weaker than Bell-nonlocality.

%Thus we also see that the steerable states are a subset of the entangled states.

%It is possible to derive the condition for no steering directly from the condition for Bell-locality.  This can be seen as follows.  If \erf{Bell-local1} is satisfied then it is straightforward to show that the RHS of \erf{Steer} will also be satisfied.  Simply make $\wp(b|\hat{B},\xi)=P(b|\hat{B};\rho_\xi)$ for all $\hat{B},\xi$. Hence, Bell-\emph{non}locality is a stronger concept than steerability. Similarly, if \erf{Steer} is satisfied, the RHS of \erf{separable1} will be satisfied simply by making $\wp(a|\hat{A},\xi)=P(a|\hat{A};\sigma_\xi)$ for all $\hat{A},\xi$. So we see that steerability is a stronger condition than nonseparability.

%Clearly %(from both our mathematical and operational definitions)
%steerability is stronger than nonseparability; steerable states must be nonseparable because if \erf{Steer} is not satisfied then there is no way that \erf{separable1} can be, since $\wp(a|\hat{A},\xi)$ is a more general distribution than $P(a|\hat{A},\sigma_\xi)$. Similarly, Bell-nonlocality is stronger than steerability; if \erf{Bell-local1} is not satisfied then there is no way that \erf{Steer} can be, since $\wp(b|\hat{B},\xi)$ is more general than $P(b|\hat{B},\rho_\xi)$.  

\section{Conditions for Steerability}
Below we derive conditions for steerability for four families of states $W$. In each example we parameterize the family of states in terms of  a mixing parameter $\eta\in \mathbb{R}$, and a second parameter that may be discrete.  In each case, the upper bound for $W$ to be a
state is $\eta  = 1$, and $W$ is a product state if $\eta = 0$,
and  (except in the last case) $W$ is linear in  $\eta$. %In each case $W$ is linear in $\eta$, the upper bound for $W$ to be a state is $\eta=1$, and $W$ is a product state if $\eta=0$. 
For the first two examples (Werner and isotropic states) the conditions derived are both necessary and sufficient for steerability.  For the other examples (inept states and Gaussian states) the conditions derived are merely sufficient for steerability.

In terms of the parameter $\eta$ we can define boundaries between different classes of states.  For example, we will make use of $\eta_{\rm Bell}$, defined by  $W^\eta$ being Bell-nonlocal \emph{iff} $\eta>\eta_{\rm Bell}$.  Similarly a state $W^\eta$ is entangled \emph{iff} $\eta>\eta_{\rm ent}$. Our goal is then to determine (or at least bound) the steerability boundaries for the above classes of states, defined by $W^\eta$ being steerable \emph{iff} $\eta>\eta_{\rm steer}$.

Crucial to the derivations of the conditions for steerability of these states is the concept of an \emph{optimal ensemble} $F^\star = \{\rho^\star_\xi \wp^\star_\xi\}$; that is, an ensemble such that if it cannot satisfy \erf{steering2} then no ensemble can satisfy it. In finding an optimal ensemble $F^\star$ we use the symmetries of $W$ and ${\mathfrak M}_\alpha$:
\begin{lemma} \label{lem_covariant}
Consider a  group $G$ with a unitary representation $\hat{U}_{\alpha\beta}(g) = \hat{U}_\alpha(g)\otimes \hat{U}_\beta(g)$ on the Hilbert space for Alice and Bob. Say that $\forall \hat{A}\in {\mathfrak M}_\alpha,\ \forall a\in \lambda(\hat{A}),\ \forall g\in G,$  we have
$\hat{U}_\alpha\dg(g)\hat{A}\hat{U}_\alpha(g) \in {\mathfrak M}_\alpha$ and
\beq
\tilde{\rho}^{\hat{U}_\alpha\dg(g)\hat{A}\hat{U}_\alpha(g)}_a = \hat{U}_\beta(g)\tilde{\rho}^{A}_a\hat{U}_\beta\dg(g).\label{lem1}
\eeq
Then there exists a $G$-covariant optimal ensemble:
$\forall g\in G,\ \{\rho^\star_\varsigma \wp^\star_\varsigma\} = \{\hat{U}_\beta(g)\rho^\star_\varsigma \hat{U}_\beta\dg(g) \wp^\star_\varsigma\}$.
\end{lemma}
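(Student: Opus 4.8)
The plan is to prove the lemma by group averaging: from an arbitrary ensemble that furnishes a no-steering model I would construct a manifestly $G$-covariant ensemble and show it is no worse, so that an optimal ensemble may always be chosen covariant. The first ingredient is that Bob's reduced state is $G$-invariant. Summing the covariance hypothesis \erf{lem1} over $a\in\lambda(\hat A)$ and using $\sum_a\hat\Pi^A_a={\bf I}$, so that $\sum_a\tilde\rho^A_a=\rho$ for every measurement, gives $\rho=\hat{U}_\beta(g)\rho\hat{U}_\beta\dg(g)$ for all $g\in G$. This is exactly what guarantees that a symmetrized ensemble still decomposes $\rho$.

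Given any ensemble $F=\cu{\wp_\xi\rho_\xi}$ satisfying \erf{steering2} for all $\hat A\in{\mathfrak M}_\alpha$ and $a\in\lambda(\hat A)$, I would define the symmetrized ensemble indexed by $\varsigma=(\xi,g)$ with $\rho'_{(\xi,g)}=\hat{U}_\beta(g)\rho_\xi\hat{U}_\beta\dg(g)$ and weights $\wp'_{(\xi,g)}=\wp_\xi\,\rmd\mu(g)$, where $\rmd\mu$ is the normalized Haar measure (the uniform measure $1/|G|$ for finite $G$). Covariance is then immediate: by the representation property $\hat{U}_\beta(h)\hat{U}_\beta(g)=\hat{U}_\beta(hg)$ and left-invariance of $\rmd\mu$, acting with $\hat{U}_\beta(h)$ sends $\rho'_{(\xi,g)}\mapsto\rho'_{(\xi,hg)}$ and preserves the weights, so the labelled set maps to itself. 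Using the $G$-invariance of $\rho$, one checks $\sum_\varsigma\wp'_\varsigma\rho'_\varsigma=\int\rmd\mu(g)\,\hat{U}_\beta(g)\rho\hat{U}_\beta\dg(g)=\rho$, so $F'$ is a genuine ensemble for Bob.

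The crux is to show $F'$ inherits the no-steering property. Fixing $\hat A$ and $a$, I would apply \erf{steering2} for $F$ to the rotated measurement $\hat A_g\equiv\hat{U}_\alpha\dg(g)\hat A\hat{U}_\alpha(g)\in{\mathfrak M}_\alpha$, giving $\tilde\rho^{\hat A_g}_a=\sum_\xi\wp(a|\hat A_g,\xi)\rho_\xi\wp_\xi$. Rewriting the left side via \erf{lem1} as $\hat{U}_\beta(g)\tilde\rho^A_a\hat{U}_\beta\dg(g)$ and conjugating by $\hat{U}_\beta\dg(g)$ gives $\tilde\rho^A_a=\sum_\xi\wp(a|\hat A_g,\xi)\,\rho'_{(\xi,g^{-1})}\,\wp_\xi$, which holds for every $g$. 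Averaging over $g$ against $\rmd\mu$ and relabelling $h=g^{-1}$ then produces $\tilde\rho^A_a=\sum_{(\xi,h)}\wp'(a|\hat A,(\xi,h))\,\rho'_{(\xi,h)}\,\wp'_{(\xi,h)}$ with $\wp'(a|\hat A,(\xi,h))\equiv\wp(a|\hat A_{h^{-1}},\xi)$, which is nonnegative and normalized over $a$ and hence a legitimate stochastic map. Thus $F'$ satisfies \erf{steering2} for all $\hat A,a$. Since symmetrization therefore never destroys a no-steering model, a non-steerable $W$ admits a covariant one; applying this to any optimal ensemble shows that its symmetrization is again optimal and is $G$-covariant, which is the claim.

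I expect the main obstacle to be the measure-theoretic bookkeeping when $G$ is continuous: giving meaning to the continuum-indexed ensemble $F'$, justifying the interchange of the Haar average with the finite LHS decomposition, and invoking inversion-invariance of $\rmd\mu$ in the relabelling $h=g^{-1}$ (legitimate because the groups arising in the examples are compact, hence unimodular). By contrast the algebraic heart is short once $\rho$ is known to be $G$-invariant and the relabelled maps are seen to be stochastic: the covariance relation \erf{lem1} is precisely the device that turns a model for the rotated measurement $\hat A_g$ into a model for $\hat A$ assembled from rotated hidden states.
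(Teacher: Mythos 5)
Your proposal is correct and follows essentially the same route as the paper's proof: starting from an arbitrary ensemble satisfying \erf{steering2}, you apply the no-steering decomposition to the rotated measurement $\hat{U}_\alpha\dg(g)\hat{A}\hat{U}_\alpha(g)$, use the covariance hypothesis \erf{lem1} to pull the group action onto the hidden states, and average over $G$ to obtain a covariant ensemble with the relabelled response function. Your version is in fact slightly more careful than the paper's (which only sketches the discrete case), notably in tracking the $g\mapsto g^{-1}$ relabelling and in explicitly verifying that the symmetrized ensemble still decomposes Bob's reduced state $\rho$.
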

\begin{proof} For specificity, consider a discrete group with order $|G|$. 
Say there exists an ensemble $F=\{\rho_\xi\wp_\xi\}$ satisfying \erf{steering2} for some map $\wp(a|\hat{A},\xi)$. Then under the conditions of Lemma \ref{lem_covariant}, $\tilde\rho_a^A$ can be rewritten as
\beq
 |G|^{-1}\sum_{g\in G} \sum_\xi \hat{U}_\beta(g)\rho_\xi \hat{U}_\beta\dg(g)  \, \wp(a|\hat{U}_\alpha\dg(g)\hat{A}\hat{U}_\alpha(g),\xi) \wp_\xi \nn
\eeq
Thus
the $G$-covariant ensemble $F^\star = \{\rho_{(g,\xi)}^\star \wp(g,\xi) \}$, with $\rho_{(g,\xi)}^\star = \hat{U}_\beta(g) \rho_\xi  \hat{U}_\beta\dg(g)$ and $\wp(g,\xi)=\wp_\xi /|G|$, satisfies \erf{steering2} with the choice
\beq
\wp^\star(a|\hat{A},(g,\xi)) =  \wp(a|\hat{U}_\alpha\dg(g)\hat{A}\hat{U}_\alpha(g),\xi).
\eeq
The analogous formulas for the case of continuous groups are elementary.
\end{proof}

Once we have determined the optimal ensemble for a given class of states (and a given measurement strategy) it remains to determine if there exists a stochastic map $\wp(a|\hat{A},\xi)$ such that \erf{steering2} is true. In each steering experiment we assume that Alice really does send Bob an entangled state.  To determine if the state is steerable, we take the perspective of a skeptical Bob and imagine that in each case Alice is attempting to cheat; that is, that she sends Bob a random state from the optimal ensemble $F^\star$ and does not perform her measurements.  She simply announces her alleged measurement results based on $\wp(a|\hat{A},\xi)$ which defines her cheating strategy. We compare the states that Bob would obtain if Alice really did send half of an entangled state and perform a measurement with those that could be prepared using an optimal ensemble and cheating strategy. %This can be checked experimentally by Bob making measurements, and looking at the correlations between his results and those reported by Alice. 

There are two possible reasons why Bob could find that his measurements results are consistent with results reported by Alice.  First, Alice could really be sending Bob half of an entangled state and steering his system via her measurements. Or, as the skeptical Bob believes, Alice could really just be sending him different pure states in each run and announcing her results based on her knowledge of this state.

%If Alice sends Bob part of a nonseparable state but it is possible that she was using a cheating strategy instead, Bob will not believe that the state is entangled (this is an example of an entangled, but not steerable, state).  Thus we need to determine the optimal ensemble and cheating strategy for Alice to use in a steering experiment and compare this with the predicted results if Alice really were sending an entangled state. 

Now if the optimal ensemble (which we are assuming Bob is clever enough to determine) can explain the correlations between Alice's announced results and Bob's results then the state sent by Alice is \emph{not} steerable.  However, if the best cheating strategy that Alice could possibly use is insufficient to explain the correlations then Bob must admit that Alice has sent him part of an entangled state.  Furthermore, if he makes this admission, the state must be steerable. 
 
%We expect that Bob will find that his state will agree with the results reported by Alice. However, over many runs of the experiment, Bob can compare his results with Alice's optimal cheating strategy (which we assume Bob is clever enough to figure out), if he finds that his local state is inconsistent with Alice's optimal cheating strategy then he will know that she must really be sending him part of an entangled state (or else the results she reported would not agree with Bob's measurements).

%Once we have determined the optimal ensemble for a given class of states (and a given measurement strategy) it remains to compare its predictions with those of QM; if the optimal ensemble cannot replicate the results of QM then no ensemble can, and the states are steerable.  %At the steering boundary the state $\tilde{\rho}_a^A$ predicted using the optimal ensemble will be equal to $\tilde{\rho}_a^Q=\bra{a}W\ket{a}$ predicted by QM. Below the steering boundary, $\tilde{\rho}_a^A$ will be more mixed than $\tilde{\rho}_a^Q$. Thus to find the steering boundary we compare the quantity $\bra{a}\tilde{\rho}_a\ket{a}$ evaluated using $\tilde{\rho}_a=\tilde{\rho}_a^A$ and $\tilde{\rho}_a=\tilde{\rho}_a^Q$. 

\subsection{Werner states}

This family of states in $\mathbb{C}_d\otimes \mathbb{C}_d$ was introduced by Werner in Ref. \cite{WerPRA89}.
As mentioned above, we parametrize it by $\eta\in\mathbb{R}$ such that $W^\eta_d$ is linear in $\eta$,  it is a product state for $\eta=0$, and is a state at all only for $\eta \leq 1$:
 \beq
W_d^\eta=\left(\frac{d-1+\eta}{d-1}\right)\frac{\mathbf{I}}{d^2}-\left(\frac{\eta}{d-1}\right)\frac{\mathbf{V}}{d}.\label{WernerStates}\eeq
Here $\mathbf{I}$ is the identity and $\mathbf{V}$ the
``flip" operator defined by $\mathbf{V}\ket{\varphi}\otimes\ket{\psi}\equiv \ket{\psi}\otimes\ket{\varphi}$.
Defining $\Phi=(1-(d+1)\eta)/d$ allows one to reproduce Werner's notation
\cite{WerPRA89} for these states.
Werner states are nonseparable \emph{iff} $\eta > \eta_{\rm ent} = 1/(d+1)$ \cite{WerPRA89}.
For $d=2$, the Werner states violate the Clauser-Horne-Shimony-Holt (CHSH) inequality \emph{iff} $\eta > 1/\sqrt{2}$  \cite{HorEtalPLA95}. This places an {\em upper} bound on $\eta_{\rm Bell}$. % defined by  $W^\eta_d$ being Bell-nonlocal iff $\eta>\eta_{\rm Bell}$. 
For $d>2$ only the trivial upper bound\footnote{This is because no Bell inequality has been found that the Werner states violate for $d>2$. It is only an upper bound because this is not a test of all possible Bell inequalities.} of $1$ is known. However, Werner found a lower bound on $\eta_{\rm Bell}$ of $1-1/d$ \cite{WerPRA89}, which is strictly greater than $\eta_{\rm ent}$.

Now let us consider the possibility of steering Werner states.  
We allow Alice all possible measurement strategies: ${\mathfrak M}_\alpha = {\mathfrak D}_\alpha$, and without loss of generality take the projectors to be rank-one: $\hat{\Pi}_a^A = \ket{a}\bra{a}$. 
For Werner states, the conditions of Lemma \ref{lem_covariant} are then satisfied for the $d$-dimensional unitary group ${\mathfrak U}(d)$. % of $d$-dimensional unitary transformations.
Specifically,  $g\to \hat{U}$, and  $\hat{U}_{\alpha\beta}(g) \to \hat{U}\otimes \hat{U}$ \cite{WerPRA89}. Again without loss of generality we can take the optimal ensemble to consist of pure states, in which case there is a unique covariant optimal ensemble, $F^\star =
\cu{\ket{\psi}\bra{\psi}d\mu_{\rm Haar}({\psi})}$, where $d\mu_{\rm Haar}(\psi)$ is the Haar measure over ${\mathfrak U}(d)$. 

If Alice were to make any projective measurement of her half of a Werner state and obtain the result $a$, Bob's unnormalized conditioned state would be given by
\bqa \tilde{\rho}_a^A&=&{\rm Tr}_A\left[(\hat{\Pi}_a^A\otimes \mathbf{I})W_d^\eta\right]=\bra{a}W_d^\eta\ket{a}\nonumber\\
&=&\left(\frac{d-1+\eta}{d\left(d-1\right)}\right)\frac{\mathbf{I}}{d}-\left(\frac{\eta}{d\left(d-1\right)}\right)\ket{a}\bra{a}.
\label{WernerBobconditioned}
\eqa This is a state proportional to the completely mixed state \emph{minus} a term proportional to the state Alice's system is projected into by her measurement.

We now determine if it is possible for Alice to simulate this conditioned state using the optimal ensemble $F^\star$ and an optimal cheating strategy defined by $\wp^\star(a|\hat{A},\psi)$. That is, we imagine that in each run of the experiment Alice simply sends Bob a state $\rho_\psi=\ket{\psi}\bra{\psi}$ drawn at random from $F^\star =
\cu{\ket{\psi}\bra{\psi}d\mu_{\rm Haar}({\psi})}$.  When asked to perform a measurement $\hat{A}$ and announce her result, she uses $\wp^\star(a|\hat{A},\psi)$ (which is based on her knowledge of $\rho_\psi=\ket{\psi}\bra{\psi}$) to determine her answer. In testing whether this is actually what Alice could be doing, we only need to consider the quantity 
\beq \label{wernersteer}
\bra{a}\tilde{\rho}_a^A\ket{a}= \frac{(1-\eta)}{d^2}.
\eeq
This is due to the form of $\tilde{\rho}_a^A$ noted above in \erf{WernerBobconditioned}.% it is a mixture made up of the completely mixed state \emph{minus} $\ket{a}\bra{a}$.%the eigenstate of $\hat{A}$ with the maximum overlap with $\ket{\psi}\bra{\psi}$.

If (on average) the strategy used by Alice with the ensemble $F^\star$ produces the correct overlap with the state $\ket{a}\bra{a}$ then \erf{steering2} will hold and steering is not possible. Thus Alice makes use of the overlap with $\ket{a}\bra{a}$ of the random states $\ket{\psi}$ in determining the optimal $\wp(a|\hat{A},\psi)$. 

Since Alice's goal is to simulate $\tilde{\rho}_a^A$, as defined in \erf{WernerBobconditioned}, she will determine which of the eigenstates of $\hat{A}$ has the \emph{least} overlap with $\ket{\psi}\bra{\psi}$ in each run of the experiment and announce the eigenvalue associated with that eigenstate as her result. On average Bob would then find that his conditioned state has the least possible overlap with $\ket{a}\bra{a}$.  %This is a sensible strategy for Alice to employ because the Werner states are anticorrelated and thus Bob's state would be projected away from the state $\ket{a}\bra{a}$ in each run (that is, have the smallest overlap with $\ket{a}\bra{a}$).  %toward the eigenstate that is furthest from the eigenstate that Alice's system would be projected into if she really were to perform a measurement. 
Writing this explicitly, the optimal distribution is given by
\beq \wp^\star(a|\hat{A},\psi)=\Bigg\{
\begin{array}{l}
  1\ {\rm if}\ \langle \psi| \hat{\Pi}_a^A |\psi\rangle < \langle \psi| \hat{\Pi}_{a'}^A |\psi\rangle\  \forall\ a'\neq a\\ 
  0\ {\rm otherwise}.
\end{array}\label{WernerMethod}%
\eeq 
It is straightforward to see that this ensemble is normalized, that is, $\forall \hat{A},\psi$,
\beq
\sum_a \wp^\star(a|\hat{A},\psi) = 1.
\eeq

%is the optimal strategy for simulating $\tilde{\rho}_a^A$ because the form of $\tilde{\rho}_a^A$ is proportional to the identity \emph{minus} a term proportional to the eigenstate of $\hat{A}$ with the maximum overlap with $\ket{\psi}\bra{\psi}$.
Clearly the optimal distribution, $\wp^\star(a|A,\psi)$, is the distribution %e optimal $\wp(a|A,\psi)$ is the one 
that will predict the same overlap with $\ket{a}\bra{a}$ as that given by \erf{wernersteer}. This occurs at precisely the steering boundary $\eta_{\rm steer}$. %, defined by $W^\eta_d$ being steerable iff $\eta>\eta_{\rm steer}$.  
When $\eta < \eta_{\rm steer}$ steering cannot be demonstrated, as it is possible that Alice is using a cheating strategy to simulate Bob's conditioned state.  This means that Alice's optimal cheating strategy could actually make Bob believe that his conditioned state has a smaller overlap with $\ket{a}\bra{a}$ than would be expected from \erf{wernersteer}.  %\footnote{This is not what Alice wishes to do however, her goal is to make Bob believe that his conditioned state has the \emph{same} overlap with $\ket{a}\bra{a}$ as predicted from \erf{wernersteer}. }. %Recall that since the Werner states are anticorrelated, Alice's goal is to simulate as \emph{small} an overlap with $\ket{\psi}\bra{\psi}$ as possible. 
In this case Alice could correctly simulate $\tilde{\rho}_a^A$ simply by introducing the appropriate amount of randomness to her responses (i.e. increase the overlap to the correct size by choosing a different $\wp(a|\hat{A},\psi)$).% non-optimal).  
To reiterate, when $\eta < \eta_{\rm steer}$ it is possible that Alice is performing a classical strategy which is consistent with Bob's results, so he will not believe that the state is genuinely steerable.

%In order to specify the optimal $\wp(a|\hat{A},\psi)$ we consider the form of $\tilde{\rho}_a^A$, the optimal ensemble and an arbitrary measurement $\hat{A}$. %=\sum_a a\hat{\Pi}_a^A$ (since we are allowing $\hat{A} \in \mathfrak{D}_\alpha$, and the optimal ensemble is uniformly distributed, this is sufficient to determine the steering boundary in general). 
% Alice's goal is to simulate $\tilde{\rho}_a^A$ as defined in \erf{WernerBobconditioned}. Therefore, she will determine which of the eigenstates of $\hat{A}$ that is \emph{furthest} from $\ket{\psi}\bra{\psi}$ in each run of the experiment and announce the eigenvalue associated with that eigenstate as her result. That is, 
%\beq \wp^\star(a|\hat{A},\psi)=\Bigg\{
%\begin{array}{l}
%  1\ {\rm if}\ \langle \psi| \hat{\Pi}_a^A |\psi\rangle < \langle \psi| \hat{\Pi}_{a'}^A |\psi\rangle\ \forall\ a\neq a'\\
%  0\ {\rm otherwise}.
%\end{array}\label{WernerMethod}%
%\eeq 
%This is the optimal strategy for simulating $\tilde{\rho}_a^A$ because the form of $\tilde{\rho}_a^A$ is proportional to the identity \emph{minus} a term proportional to the eigenstate of $\hat{A}$ with the maximum overlap with $\ket{\psi}\bra{\psi}$.

To find the form of $\eta_{\rm steer}$ we compare with Werner's result \cite{WerPRA89} for the lower bound on $\eta_{\rm ent}$. We find that  he actually used the construction outlined above.  His LHVs for Bob's system were in fact the LHSs used in the optimal ensemble $F^\star$. Werner shows that for any positive normalized distribution $\wp(a|\hat{A},\psi)$,
\beq \label{WernerGeometry}
\bra{a}\int d\mu_{\rm Haar}(\psi) \ket{\psi}\bra{\psi} \wp(a|\hat{A},\psi) \ket{a} \geq 1/d^3.
\eeq
The equality is attained for the optimal $\wp^\star(a|\hat{A},\psi)$ specified by \erf{WernerMethod} (this produces the smallest possible predicted overlap with $\ket{a}\bra{a}$).

Now to determine when \erf{steering2} is satisfied by $F^\star$ (and thus to determine $\eta_{\rm steer}$) we simply compare \erf{WernerGeometry} with \erf{wernersteer}.  We find that Alice cannot simulate the correct overlap with $\ket{a}\bra{a}$ \emph{iff}
\beq (1-\eta)/d^2 < 1/ d^3.\eeq %This is the condition for steerability, as when this condition is satisfied it is not possible for Alice to use a classical strategy to simulate the correlations with Bob's results. When this is the case, the optimal cheating strategy always predicts a greater overlap with $\ket{a}\bra{a}$ than that predicted using \erf{wernersteer}; there is no valid strategy for Alice to simulate a smaller (correct) overlap with $\ket{a}\bra{a}$. Hence Bob will accept that the state $W_d^\eta$ is genuinely steerable. %In this case we are above the steering boundary and Alice cannot introduce randomness to produce the correct overlap with $\ket{a}\bra{a}$ because there is no valid $\wp(a|\hat{A},\psi)$ to allow her to do so (by definition, the optimal $\wp(a|\hat{A},\psi)$ is the best she can do, so when this does not work, steering is possible).
 Hence we see that for Werner states \beq \eta_{\rm steer}= 1-\frac{1}{d}.\label{WernerSteerCond}\eeq

Recently a new lower bound for $\eta_{\rm Bell}$ was found for $d=2$ by Ac\'in \emph{et al.} \cite{AciEtalPRA06}, greater than $\eta_{\rm steer}$, as shown in
Fig.~\ref{combinedfigures}. Ref.~\cite{AciEtalPRA06} makes use of a connection with Grothendieck's constant (a mathematical constant from Banach space theory) to develop a local hidden variable model for projective measurements when $d=2$. Ac\`{i}n \emph{et al.} show that for two-qubit Werner  states %(and thus also for two qubit isotropic states)
 \beq 0.7071\approx 1/\sqrt{2}\geq\eta_{\rm Bell}\geq 1/K_g(3)\approx 0.6595,\eeq where $K_g(3)\approx 1.5163$ is Grothendieck's constant of order 3. Bounds on $K_g(3)$ ensure that for $d=2$ Werner states $\eta_{\rm Bell}\geq 0.6595$.  Using \erf{WernerSteerCond}, we see that when $d=2$, $\eta_{\rm steer}=1/2$. This proves that steerability is
strictly weaker than Bell-nonlocality as $\eta_{\rm steer} < 0.6595 \leq \eta_{\rm Bell}$. %as it is not possible for the two boundaries $\eta_{\rm Bell}$ and $\eta_{\rm steer}$ to overlap.  
 It is also well known that for $d=2$, $\eta_{\rm ent} =1/3$ which is strictly less than $\eta_{\rm steer}$.  Thus using the $d=2$ Werner states as an example we also see that steerability is strictly
stronger than non-separability. This clear distinction between the three classes can be seen on the left hand axis of Fig. 2 (a).

\begin{figure}
\begin{center}
\includegraphics[width=8.0cm]{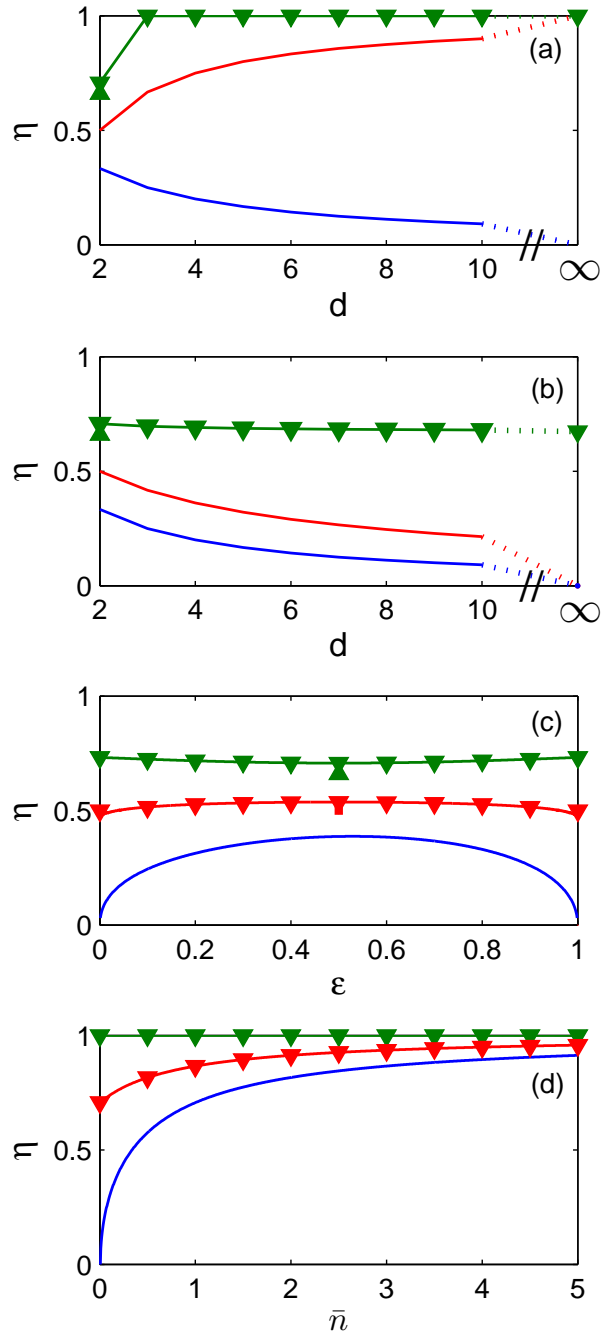}
\end{center}
\caption{(Color on-line.) Boundaries between classes of entangled
states for Werner (a) and isotropic (b) states $W^\eta_d$, inept states $W_\epsilon^\eta$ (c)  and two-mode symmetric Gaussian states $W_{\bar n}^\eta$ (d). 
The bottom (blue) line is $\eta_{\rm ent}$, above which states are
entangled. The next (red) line is $\eta_{\rm steer}$, above which
states are steerable. In cases (c) and (d) the down-arrows indicate that we have only an upper bound on $\eta_{\rm steer}$. The top (green) line with down-arrows is an upper
bound on $\eta_{\rm Bell}$, above which states are
Bell-nonlocal. The up-arrows in cases (a) and (b) are lower bounds on $\eta_{\rm Bell}$
for $d=2$. This lower bound establishes that the classes are strictly distinct. In cases (a) and (b), dots join values at
finite $d$ with those at $d=\infty$. The separate point in (c) is explained at the end of Sec. \ref{notes:PES}.}%, and the seemingly incongruous points for small $\bar{n}$ in (d) are explained at the end of Sec. \ref{GaussSym}.}
\label{combinedfigures}\end{figure}

\subsection{Isotropic states\label{notes:Iso}}

The isotropic states, which were introduced in \cite{HorHorPRA99}, can be parametrized identically to the Werner states; that is, in terms of their dimension $d$ and a mixing parameter $\eta$:
\beq W_d^\eta=(1-\eta){\mathbf{I}}/{d^2}+\eta \mathbf{P}_+.
\label{isotropicstates}\eeq Here
$\mathbf{P}_+=\ket{\psi_+}\bra{\psi_+}$, where $\ket{\psi_+}
=\sum_{i=1}^{d}\ket{i}\ket{i}/\sqrt{d}$ is a maximally entangled state. In fact, for $d=2$ it is straightforward to verify that the isotropic
states are identical to Werner states up to local unitaries. Isotropic states are
nonseparable \emph{iff}  $\eta > \eta_{\rm ent} = 1/(d+1)$
\cite{HorHorPRA99}. 

A non-trivial upper bound on $\eta_{\rm Bell}$
for all $d$ is known; in Ref. \cite{ColEtalPRL02} it is shown that a Bell inequality is certainly violated by
a $d$-dimensional isotropic state if \beq \eta >\frac{2}{I_d({\rm QM})}\geq\eta_{\rm Bell},\label{BVIso}\eeq where 
$I_d({\rm QM})$ is defined as \beq
I_d({\rm QM})=4d\sum_{k=0}^{[d/2]-1}\left(1-\frac{2k}{d-1}\right)\left(q_k-q_{-(k+1)}\right),\eeq
and $q_k=1/\{2d^3\sin^2[\pi(k+1/4)/d]\}.$ Collins \emph{et al}.
\cite{ColEtalPRL02} go on to show that in the limit as
$d\rightarrow\infty$ the limiting value this upper bound on $\eta_{\rm Bell}$ %of $\eta$ for a Bell
%inequality violation to occur 
approaches $\pi^2/(16 \times {\rm
Catalan})\approx 0.6734$, where ${\rm Catalan}\approx0.9159$ is Catalan's
constant.

In determining steerability we again allow Alice all possible measurement strategies: ${\mathfrak M}_\alpha = {\mathfrak D}_\alpha$, and take the projectors to be rank-one: $\hat{\Pi}_a^A = \ket{a}\bra{a}$. 
The isotropic states have the symmetry property that they are invariant under transformations of the form $\hat{U}^*\otimes \hat{U}$, hence the conditions of Lemma \ref{lem_covariant} are again satisfied for the $d$-dimensional unitary group ${\mathfrak U}(d)$. % of $d$-dimensional unitary transformations.
In this case,  $g\to \hat{U}$, and  $\hat{U}_{\alpha\beta}(g) \to \hat{U}^*\otimes \hat{U}$. Thus we can again take the optimal ensemble to be %consist of pure states, in which case there is a unique covariant optimal ensemble, 
$F^\star =
\cu{\ket{\psi}\bra{\psi}d\mu_{\rm Haar}({\psi})}$.% where $d\mu_{\rm Haar}(\psi)$ is the Haar measure over ${\mathfrak U}(d)$. We find that due to the similarities between the isotropic and Werner states, the optimal ensemble $F^\star$ is the same for both these classes of states.

%To determine steerability we must construct an optimal ensemble for the class of states of interest and the measurement strategy to be used.  Based on the symmetry of the isotropic states we take an ensemble consisting of pure states, $F^\star =
%\cu{\ket{\psi}\bra{\psi}d\mu_G({\psi,m})}$, where $d\mu_G(\psi,m)$ is a Gaussian measure over ${\mathfrak U}(d)$ (see appendix \ref{GaussApp} for details). The conditions of Lemma 1 are then satisfied for the $d$-dimensional unitary group ${\mathfrak U}(d)$ with $g\to \hat{U}$, and  $\hat{U}_{\alpha\beta}(g) \to \hat{U}^*\otimes \hat{U}$; the ensemble $F^\star$ is indeed optimal. 

Now consider the conditioned state that Bob would obtain if Alice were to make a measurement $\hat{A}$ on her half of $W_d^\eta$,
\bqa \tilde{\rho}_a^A&=&{\rm Tr}_A\left[(\hat{\Pi}_a^A\otimes \mathbf{I})W_d^\eta\right]\nonumber\\
&=&\left(\frac{1-\eta}{d}\right)\frac{\mathbf{I}}{d}+\frac{\eta}{d}\ket{a}\bra{a}.
\label{IsoBobconditioned}
\eqa This is a state proportional to the completely mixed state \emph{plus} a term proportional to the state Alice's system would be projected into by her measurement. Note the similarity with the Werner state example, where the conditioned state was proportional to the completely mixed state \emph{minus} a term proportional to $\ket{a}\bra{a}$.  This difference arises because the isotropic states are symmetrically correlated rather than anti-symmetrically correlated as in the Werner state example.

Again we wish to determine if it is possible for Alice to simulate the conditioned state $\tilde{\rho}_a^A$ using the optimal ensemble $F^\star$ and a cheating strategy defined by an optimal distribution $\wp^\star(a|\hat{A},\psi)$. %In this case we will first look at some properties of the optimal ensemble. 

Imagine that in each run of a steering experiment Alice simply sends Bob a  state $\ket{\psi}\bra{\psi}$ drawn at random from $F^\star =
\cu{\ket{\psi}\bra{\psi}d\mu_{G}({\psi,m})}$.  When asked to perform a measurement $\hat{A}$ and announce her result, she uses $\wp^\star(a|\hat{A},\psi)$ to determine her answer. In testing whether this is actually what Alice could be doing, we again only need to consider the quantity 
\beq \bra{a}\tilde{\rho}_a^A\ket{a}= \frac{\eta}{d}+\frac{(1-\eta)}{d^2}\label{Isosteer}
.\eeq 
In this case Alice's strategy is similar to the Werner state example, except now she wants to to simulate the \emph{maximum} possible overlap with $\ket{a}\bra{a}$ (due to the form of $\tilde{\rho}_a^A$). %is because the conditioned state $\tilde{\rho}_a^A$ is a mixture of the completely mixed state \emph{plus} a term proportional to $\ket{a}\bra{a}$. 
Therefore, Bob will only concede that $W_d^\eta$ is steerable if the maximum overlap with $\ket{a}\bra{a}$ predicted using the ensemble $F^\star$ and the optimal cheating strategy $\wp^\star(a|A,\psi)$ is less than that predicted by \erf{Isosteer}. In this case there would be no possible classical strategy that Alice could possibly be using to simulate the correlations with Bob's results. %This is because the conditioned state $\tilde{\rho}_a^A$ is proportional to the completely mixed state \emph{plus} a term proportional to Alice's alleged result. 
Identical predictions for the overlap with $\ket{a}\bra{a}$ will again occur precisely at the steering boundary $\eta_{\rm steer}$, which occurs when $\wp^\star(a|\hat{A},\psi)$ is used.

The optimal $\wp(a|\hat{A},\psi)$ is defined in a similar manner to the Werner state example. However, in each run of the experiment % Werner state example, in order to specify the optimal $\wp(a|\hat{A},\psi)$ we must consider the form of $\tilde{\rho}_a^A$, the optimal ensemble and an arbitrary measurement $\hat{A}=\sum_a a\hat{\Pi}_a^A$ (again since we are allowing $\hat{A} \in \mathfrak{D}_\alpha$ and the optimal ensemble is uniformly distributed this is sufficient to determine the steering boundary in general).  Alice's goal is to simulate $\tilde{\rho}_a^A$ as defined in \erf{IsoBobconditioned}. Therefore, she will 
Alice now determines which of the eigenstates of $\hat{A}$ is \emph{closest} to $\ket{\psi}\bra{\psi}$ and announces the eigenvalue associated with that eigenstate as her result. That is, 
\beq \wp^\star(a|\hat{A},\psi)=\Bigg\{
\begin{array}{l}
  1\ {\rm if}\ \langle \psi| \hat{\Pi}_a^A |\psi\rangle > \langle \psi| \hat{\Pi}_{a'}^A |\psi\rangle\  \forall\ a'\neq a\\ 
  0\ {\rm otherwise}.
\end{array}\label{IsoMethod}%
\eeq

%This is the optimal strategy for Isotropic states because $\tilde{\rho}_a^A$ is proportional to the identity \emph{plus} a term proportional to the eigenstate of $\hat{A}$ with the maximum overlap with $\ket{\psi}\bra{\psi}$.  Note that this is precisely opposite to the strategy used in the Werner state example.  This is due to the particular symmetry of each of the families of states; the Werner states are anticorrelated states so that the conditioned state $\tilde{\rho}_a^A$ is a mixture of the completely mixed state \emph{minus} the state $\ket{a}\bra{a}$, whereas the isotropic states are correlated so that  $\tilde{\rho}_a^A$ is a mixture of the completely mixed state \emph{plus} the state $\ket{a}\bra{a}$.  This is where the difference in the optimal strategies arises.
 
To test if \erf{steering2} holds, Alice and Bob would need to run the experiment many times and compare $\bra{a}\tilde{\rho}_a^A\ket{a}$ with the quantity
\beq \bra{a}\int d\mu_{\rm Haar} \ket{\psi}\bra{\psi}\wp^\star(a|A,\psi)\ket{a}\label{p(aA)}.\eeq %Recall that if this is less than \erf{Isosteer}, there is no classical strategy that Alice could be using in the experiment; \erf{steering2} is violated and steering is possible for isotropic states. \erf{p(aA)} 
 This can be written as
\beq \bra{a}\int_a d\mu_{\rm Haar}(\psi)\ket{\psi}\bra{\psi}a\rangle =\int_a d\mu_{\rm Haar}(\psi)|\bra{a}\psi\rangle|^2,\label{IsoavA}
\eeq where the subscript $a$ on the integral means that in the integral only those states with $\left|\bra{a}\psi\rangle\right|$ greater than all others will contribute. As shown in Appendix \ref{App:optens}, a random state $\ket{\psi}$ from the ensemble $F^\star$ can be described by the unnormalized state
\beq \ket{\tilde{\psi}}=m\ket{\psi}=\frac{1}{\sqrt{d}}\sum_{j=1}^d z_j\ket{\phi_j},\label{isoensemblestate}
\eeq where the $z_j$ are mutually independent complex Gaussian random variables with zero mean and zero second moments except for $\langle z_j^\ast z_k\rangle=\delta_{jk}$. That is, we can replace the Haar measure $d\mu_{\rm Haar}(\psi)$ by $d\mu_G(\psi,m)=d\mu_G(m)d\mu_{\rm Haar}(\psi)$. In terms of the variables $\{z_j\}$,this can be expressed as
\beq d\mu_G(\psi,m)\rightarrow \pi^{-d}\exp\left(-\sum_{i=1}^d z_i^2\right)d^2z_1...d^2z_d.
\eeq 

Now using the Gaussian measure $d\mu_G(\psi,m)$ to describe the ensemble $F^\star$, we can rewrite \erf{IsoavA} as
\bqa \int_a d\mu_{\rm Haar}(\psi)|\bra{a}\psi\rangle|^2 &\!\!=&\!\!\!\int_a \!d\mu_{\rm Haar}(\psi)|\bra{a}\psi\rangle|^2\frac{\int d\mu_G(m)m^2}{\int d\mu_G(m)m^2}\nonumber\\
&\!\!=&\!\frac{\int_ad\mu_G(\psi,m)|\bra{a}\tilde{\psi}\rangle|^2}{\int d\mu_G(m)m^2}.\label{IsoAv}
\eqa
It is straightforward to show that the denominator equals one (see Appendix \ref{App:norm}), and hence we can evaluate the numerator (left to  Appendix \ref{App:Int}) to find that
\beq\int_ad\mu_G(\psi,m)|\bra{a}\tilde{\psi}\rangle|^2=\frac{H_d}{d^2},\label{H_d}
\eeq where $H_d=1+1/2+1/3+\ldots+1/d$ is the Harmonic series.

Thus we find that for any positive normalized distribution $\wp(a|\hat{A},\psi)$ we must have
\beq
\bra{a}\int d\mu_{\rm Haar}(\psi)\ket{\psi}\bra{\psi}\wp(a|\hat{A},\psi)\ket{a} \leq \frac{H_d}{d^2},\label{IsoCondition}
\eeq with the equality obtained for the optimal $\wp^\star(a|\hat{A},\psi)$ as defined in \erf{IsoMethod}.
%To determine when \erf{steering2} is satisfied by $F^\star$ we recall \erf{IsoBobconditioned} for which it is simple to show that, for
%any $\hat{A} \in {\mathfrak D}_\alpha$ and $a \in \lambda(A)$,
%\beq \label{Isosteer}
%\bra{a}\tilde{\rho}_a^A\ket{a}= \frac{\eta}{d}+\frac{(1-\eta)}{d^2}.
%\eeq
Comparing this with \erf{Isosteer} we see that steering can be demonstrated \emph{iff}
\beq\frac{\eta}{d}+\frac{(1-\eta)}{d^2} > \frac{H_d}{d^2}.\eeq Thus for isotropic states \beq \eta_{\rm steer} =\frac{H_d-1}{d-1}\  \begin{array}{c} \longrightarrow \\ \mbox{\rm \small large\ d} \end{array}\ \   \frac{\ln(d)}{d}.
\label{IsoSteerCond}\eeq { For $d=2$ the isotropic states are equivalent (up to local unitaries) to the Werner states, and we again find that $\eta_{\rm steer}=1/2$ which is strictly less than $\eta_{\rm Bell}$ and strictly greater than $\eta_{\rm ent}$.}   For $d>2$, $\eta_{\rm steer}$ is greater than $\eta_{\rm ent}$ and significantly less than an upper bound on $\eta_{\rm Bell}$.  This is shown in Fig. \ref{combinedfigures}(b).  For large $d$ we see that both $\eta_{\rm steer}$ and $\eta_{\rm ent}$ tend to zero, however, $\eta_{\rm steer}$ approaches zero more slowly; it is larger than $\eta_{\rm ent}$ by a factor of $\ln(d)$  \cite{AciEtalPRL07}.

\subsection{Inept states\label{notes:PES}}

We now consider a family of states with less symmetry than the previous examples. This makes the analysis more difficult, meaning that we cannot find $\eta_{\rm steer}$ exactly. However, making use of the symmetry properties of the states allows us to find an upper bound on $\eta_{\rm steer}$.   We define a family of two-qubit states  by 
\begin{equation}W_\epsilon^\eta=\eta\ket{\psi}\bra{\psi}+(1-\eta)\rho_\alpha\otimes\rho_\beta,
\label{PESmatrix}
\end{equation}
where \beq \ket{\psi}=\sqrt{1-\epsilon}\ \ket{0_\alpha0_\beta}+\sqrt{\epsilon}\
\ket{1_\alpha1_\beta},\eeq and the reduced states $\rho_{\alpha(\beta)}$ are found by
partial tracing with respect to Bob (Alice). That is,
\beq \rho_{\alpha(\beta)}={\rm Tr}_{\beta(\alpha)}[\ket{\psi}\bra{\psi}].\eeq   As in the previous examples, this is a two-parameter family of states; the parameter $\eta$ is again a mixing parameter, and the parameter
$\epsilon$ determines how much entanglement is present in the state
$\ket{\psi}$.  Note that when $\epsilon=1/2$ these states are equivalent to the two-dimensional Werner and isotropic states.

This family of states was studied in Ref. \cite{JonEtalPRA05} in the context of distributing entanglement. The authors considered an inept company attempting to distribute pure entangled states $\ket{\psi_\epsilon}$ to many pairs of parties. However, they mixed up the addresses some fraction $1-\eta$ of the time, meaning that on average the company would actually distribute mixed entangled states of the form of \erf{PESmatrix}. Hence we will refer to this family of states as `inept' states.

As noted above, the inept states are a family of two-qubit states, which means that it is possible to evaluate $\eta_{\rm ent}$ analytically. This was done in Ref. \cite{JonEtalPRA05} leading to the following condition for
nonseparability of inept states, \beq
\eta>\eta_{\rm ent}=\frac{\epsilon(1-\epsilon)}{\epsilon(1-\epsilon)+\sqrt{\epsilon(1-\epsilon)}}.
\label{PESSep}\eeq

Ref. \cite{JonEtalPRA05} also considers Bell nonlocality of the state matrix $W_\epsilon^\eta$ by testing if a violation of the CHSH inequality
\cite{ClaEtalPRL69} occurs.  This was done using the method of \cite{HorEtalPLA95} for determining the optimal violation of the CHSH inequality for two-qubit states. One finds that the state  $W_\epsilon^\eta$  violates the CHSH inequality if and only if \beq \eta>
 \frac{4\epsilon^2-4\epsilon+1-\sqrt{4\epsilon^2-4\epsilon+3}}{4\epsilon^2-4\epsilon-1}\geq\eta_{\rm Bell} \label{PESBellviol}.\eeq 

Now, in order to demonstrate steering we must specify a measurement strategy. In the two previous examples we have used the complete set of projective measurements, $\mathfrak{M}_\alpha=\mathfrak{D}$. This would be a suitable measurement strategy to allow us to define an optimal ensemble, however, in order to make our task simpler we will consider a more restricted set of measurements.  We note that states defined by \erf{PESmatrix} have the symmetry property that they are invariant under simultaneous contrary rotations about the $z$-axes. This immediately suggests a restricted measurement scheme; we allow all measurements in the $xy$-plane but only allow a single measurement along the $z$-axis. That is, Alice's measurement scheme
is given by ${\mathfrak M}_\alpha=\cu{\s{z}} \cup \cu{\hat{\sigma}_\theta:\theta \in [0,2\pi)}$, where
\beq
\hat{\sigma}_\theta=\s{x}\cos(\theta)
+ \s{y}\sin(\theta).
\eeq In this case the conditions for Lemma 1 are satisfied for the Lie group $G$ generated by $(1/2)\hat{\sigma}_z\otimes\mathbf{I}-(1/2)\mathbf{I}\otimes\hat{\sigma}_z$ (see Appendix \ref{IneptOptimalApp}).

This is a more restricted scheme than we have considered so far, but will be sufficient to demonstrate steerability if \erf{steering2} does not hold (since it must hold for \emph{all} measurements to preclude steering). Thus we are only considering an upper bound on $\eta_{\rm steer}$ (the boundary between steerable and non steerable states using all projective measurements).

We now consider the optimal ensemble for this restricted set of measurements. We use an ensemble of pure states $F=\cu{\ket{\psi}\bra{\psi}d\mu(\psi)}$, where
\bqa \ket{\psi}\bra{\psi}&=&\frac{1}{2}\left(\mathbf{I}+\sqrt{1-z^2}\cos(\phi)\hat{\sigma}_x\right.\nonumber\\&&\phantom{\frac{1}{2}\mathbf{I}\mathbf{I}\mathbf{I}}\left.+\sqrt{1-z^2}\sin(\phi)\hat{\sigma}_y-z\hat{\sigma}_z\right)\label{ineptoptimal}\eqa and $d\mu(\psi)=(d\phi/2\pi)\wp(z)dz$.  It is straightforward to show that this ensemble is of the form of the optimal ensemble since the conditions for Lemma 1 hold (see Appendix \ref{IneptOptimalApp}). While this ensemble has the form of the optimal, it is not completely specified as $\wp(z)$ is still general.  Thus to find the optimal ensemble we need to determine the optimal probability distribution $\wp(z)$.  

First consider the reduced states that Bob would obtain if Alice really were to measure $\hat{\sigma}_z$ on her half of $W_\epsilon^\eta$. If she did so, and obtained the $+1$ result then Bob's state would be given by
\beq
\tilde{\rho}_{+1}^{\sigma_z}=\left(\epsilon\right)\frac{1}{2}\left[ \mathbf{I}-z_+\hat{\sigma}_z\right].\label{zplus}
\eeq  Similarly, for the $-1$ result, Bob would obtain
\beq
\tilde{\rho}_{-1}^{\sigma_z}=\left(1-\epsilon\right)\frac{1}{2}\left[ \mathbf{I}-z_-\hat{\sigma}_z\right].\label{zminus}
\eeq
%\beq
%\tilde{\rho}_{-1}^{\sigma_z}=\left(\epsilon\right)\frac{1}{2}\left[ \mathbf{I}+z_-\hat{\sigma}_z\right],\label{zminus}
%\eeq 
where the constants $z_+$ and $z_-$ are defined as
\bqa
z_+&=&1-2\eta-2\epsilon\left(1-\eta\right),\nonumber\\
z_-&=&1-2\epsilon\left(1-\eta\right).\label{zplusminus}
\eqa

Now we wish to determine if Alice could simulate these conditioned states using the ensemble $F$ and a suitable strategy $\wp(\pm 1|\hat{\sigma}_z,(z,\phi))$.  Due to the form of $\tilde{\rho}_{\pm 1}^{\sigma_z}$ the best strategy for Alice is to split the ensemble $F$ into two sub-ensembles, one to simulate $\tilde{\rho}_{+1}^{\sigma_z}$ and the other to simulate $\tilde{\rho}_{-1}^{\sigma_z}$.  Thus we can separate $\wp(z)$ into two positive distributions
\beq \wp(z)=\wp_+(z)+\wp_-(z).
\eeq %where $\wp_+(z)=f_+(z)^2$ and $\wp_-(z)=f_-(z)^2$.  
We imagine that Alice will attempt to simulate measuring $\hat{\sigma}_z$ by randomly generating states $\ket{\psi}\bra{\psi}$ using the distribution $\wp(z)$ and sending them to Bob.  If in a particular run of the experiment the state she sent Bob was from the sub-ensemble determined by $\wp_+(z)$ then she will announce the result $+1$.  Similarly if she sent Bob a state from $\wp_-(z)$ then she will announce $-1$. % That is,
%\beq \wp(\pm 1|\hat{\sigma}_z,(z,\phi))=\Bigg\{
%\begin{array}{l}
%  1\ {\rm if}\ z \in \wp_{\pm}(z)\\
%  0\ {\rm if}\ z \in \wp_{\mp}(z).
%\end{array}\label{IneptMethod}%
%\eeq 

Now if Alice uses this strategy, Bob will find on average that
\beq
\tilde{\rho}_{\pm 1}^{\sigma_z}=\frac{1}{2}\left[\mathbf{I}\int_{-1}^{+1}dz\wp_\pm(z)-\hat{\sigma}_z\int_{-1}^{+1}dz\wp_{\pm}(z)z \right].
\eeq% where $c_+=1-\epsilon$ and $c_-=\epsilon$. 
Comparing with \erfs{zplus}{zminus} we find that in order for the ensemble $F$ to be able to simulate Alice measuring $\hat{\sigma}_z$ we have the following constraints on $\wp(z)$, %which can be expressed in terms of $f_+(z)$ and $f_-(z)$ as
\bqa
\int_{-1}^{1}dz\wp_+(z)&=&\epsilon, \label{1}\\
\int_{-1}^{1}dz\wp_-(z)&=&1-\epsilon,\label{2}\\
\int_{-1}^{1}dz\wp_+(z)z&=&\epsilon z_+,\label{3}\\
\int_{-1}^{1}dz\wp_-(z)z&=&(1-\epsilon) z_- \label{4}.\eqa 

Now consider the conditioned states that Bob would obtain if Alice were to measure $\hat{\sigma}_\theta$:
\bqa \label{sigphi}
\tilde{\rho}_{\pm 1}^{\sigma_\theta}&=&\frac{1}{2}\left[\mathbf{I} \pm \eta\sqrt{\epsilon(1-\epsilon)}\cos(\theta)\hat{\sigma}_x \right.\\
&&\phantom{\frac{1}{2}}\left.\pm \eta\sqrt{\epsilon(1-\epsilon)}\sin(\theta)\hat{\sigma}_y 
- (1-2\epsilon)\hat{\sigma}_z \right].\nonumber
\eqa
How well could Alice simulate the above state using the ensemble $F^\star$ and a cheating strategy defined by $\wp(\pm 1|\hat{\sigma}_\theta,(z,\phi))$? We know that the ensemble $F^\star$ is symmetric under rotations about the $z$-axis.  So in this case Alice would use her knowledge of $\phi$ to determine the outcome to announce when asked to measure $\hat{\sigma}_\theta$. That is, if the state $\ket{\psi}\bra{\psi}$ that she sent Bob is closer to the positive axis defined by $\hat{\sigma}_\theta$ then she will announce the $+1$ result.  Similarly, if $\ket{\psi}\bra{\psi}$ is closer to the negative measurement axis then she announces $-1$.  This corresponds to
\beq \wp(\pm 1|\hat{\sigma}_\theta,(z,\phi))=\Bigg\{
\begin{array}{l}
  1\ {\rm if}\ \phi \in [\theta\mp\frac{\pi}{2}, \theta\pm\frac{\pi}{2})\\
  0\ {\rm if}\ \phi \in [\theta\pm\frac{\pi}{2}, \theta\mp\frac{\pi}{2}).
\end{array}\label{IneptMethod2}%
\eeq 
From the symmetry under rotations about the $z$-axis we can see that Alice will be able to do equally well using this strategy to simulate states prepared by any measurement $\hat{\sigma}_\theta$ in the $xy$-plane. Thus without loss of generality we set $\theta=0$ and consider the specific case where Alice allegedly measures $\hat{\sigma}_x$.  Under these conditions \erf{sigphi} reduces to
\beq
\tilde{\rho}_{\pm 1}^{\sigma_x}=\frac{1}{2}\left[\mathbf{I} \pm \eta\sqrt{\epsilon(1-\epsilon)}\hat{\sigma}_x - (1-2\epsilon)\hat{\sigma}_z \right]. \label{xplusminus}
\eeq

If Alice randomly sends Bob states from $F^\star$ and uses \erf{IneptMethod2} to determine her responses, Bob will find on average the state
\beq
\frac{1}{2}\left[\mathbf{I} \pm \frac{1}{\pi} \int_{-1}^{+1}dz\sqrt{1-z^2}\wp(z)\hat{\sigma}_x - \int_{-1}^{+1}dz\wp(z)z\hat{\sigma}_z \right]. \label{xaverage}
\eeq

We know that when $F$ is optimal, \erf{xaverage} will exactly simulate \erf{xplusminus}. In determining the optimal $F$ we must find the optimal $\wp(z)$, however, we are constrained in determining $\wp(z)$ by the fact that the ensemble must also simulate the states that Bob would obtain if Alice were to measure $\hat{\sigma}_z$.  These constraints are enforced by \erflist{1}{4}. 

Note that \erfs{1}{2} ensure that the $\hat{\sigma}_z$ term in \erf{xaverage} and \erf{xplusminus} will be the same. Therefore, to determine how well Alice's strategy can simulate \erf{xplusminus} we only need to consider the coefficient of the  $\hat{\sigma}_x$ term.  If the coefficient of this term predicted by \erf{xaverage} is as large as in \erf{xplusminus} then Alice's strategy simulates Bob's conditioned state perfectly.  Thus Bob would not believe that the state $W_\epsilon^\eta$ is genuinely steerable.  Hence we need to find the distribution $\wp^\star(z)$ which maximizes the $\hat{\sigma}_x$ coefficient in \erf{xaverage} to determine if steering is possible. That is, we wish to find the $\wp(z)$ that gives the maximum value of $(1/\pi) \int_{-1}^{+1}dz\sqrt{1-z^2}\wp(z)$.  This is equivalent to maximizing
\beq
\frac{1}{\pi}\int_{-1}^{+1}dz\sqrt{1-z^2}\left[ \wp_+(z)+\wp_-(z)\right],
\eeq subject to the constraints given by \erflist{1}{4}.

Writing $\wp_\pm(z)=f^2_\pm(z)$ for real functions $f_\pm(z)$ we can use Lagrange multiplier techniques to perform the optimization. We find that the optimal $\wp(z)$ has the unsurprising form
\beq
\wp^\star(z)=\epsilon \delta(z-z_+)+(1-\epsilon)\delta(z-z_-),
\eeq where the constants $z_\pm$ are defined in \erf{zplusminus} and $\delta(z-z')$ is the Dirac delta function. We see now why the choice of splitting the ensemble into two distributions was the best choice for Alice.  The optimal ensemble $F^\star$ is composed of pure states in two rings around the $z$-axis of the Bloch sphere; one in the $+z$-hemisphere defined by $z_+$, which on average may be used to simulate $\tilde{\rho}_{+1}^{\sigma_z}$, and the other in the $-z$-hemisphere defined by $z_-$ which may simulate $\tilde{\rho}_{+1}^{\sigma_z}$. (These comments apply to the case $0 \leq \epsilon \leq 1/2$.)
 
Using $\wp^\star(z)$ to evaluate \erf{xaverage} we find that
\bqa
\tilde{\rho}_{\pm 1}^{\sigma_x}&=&\frac{1}{2}\left[\mathbf{I} \pm \frac{1}{\pi}\left\{ \epsilon\sqrt{1-z_+^2}+(1-\epsilon)\sqrt{1-z_-^2}\right\}\hat{\sigma}_x\right.\nonumber\\ &&\phantom{\frac{1}{2}\mathbf{I} \pm}\Big.- (1-2\epsilon)\hat{\sigma}_z \bigg]. \label{xaverage2}
\eqa Finally comparing this with $\tilde{\rho}_{\pm 1}^{\sigma_x}$ given by \erf{xplusminus} we find that Alice's optimal cheating strategy fails to simulate measurements of $\hat{\sigma}_x$ when
\bqa\pi\eta\sqrt{\epsilon(1-\epsilon)}-(1-\epsilon)\sqrt{1-\left[1-2\epsilon(1-\eta)\right]^2}\nonumber\\
-\epsilon\sqrt{1-\left[1-2\eta-2\epsilon(1-\eta)\right]^2}>0.
\label{PESsteeringcond2}\eqa  Thus under these conditions we know that steering is possible using the measurement scheme $\mathfrak{M}_\alpha$.  Note that we have not determined $\eta_{\rm steer}$ as we have not considered all possible projective measurements.  However, we can make \erf{PESsteeringcond2} an equality to provide an equation for $\eta$ which is an \emph{upper bound} on $\eta_{\rm steer}$. This boundary is plotted in Fig. \ref{combinedfigures} (c).

For $\epsilon=1/2$ we know explicitly that \beq \eta_{\rm Bell} > \eta_{\rm steer} > \eta_{\rm ent},\eeq (since these states are equivalent to the $d=2$ Werner states, see Appendix \ref{App:IneptHalf}). This special case yields the isolated points at $\epsilon=1/2$ in Fig. \ref{combinedfigures} (c). For the remaining range of $\epsilon$ we find that our upper bound on $\eta_{\rm steer}$ is significantly lower than the upper bound on $\eta_{\rm Bell}$ and significantly higher than $\eta_{\rm ent}$.  This fact, taken with the known boundary values for $\epsilon=1/2$ gives us good reason to conjecture that the the three boundaries are strictly distinct for all $\epsilon\in [0,1]$.

\subsection{Gaussian states\label{notes:Gauss}}

Finally we investigate a general (multi-mode) bipartite Gaussian state $W$ \cite{GieCirPRA02}.  
The mode operators are defined as $\hat{q}_i=\hat{a}_i+\hat{a}_i^\dagger$ and $\hat{p}_i=-i(\hat{a}_i-\hat{a}_i^\dagger)$ for the position and momentum respectively.  Here $\hat{a}_i$ and $\hat{a}_i^\dagger$ are the annihilation and creation operators for the $i$th mode. For an $n$-mode state one may define a vector $\hat{R}=(\hat{q}_1,\hat{p}_1,...,\hat{q}_n,\hat{p}_n)$ which allows the commutation relations for the mode operators to be compactly expressed as
\beq
[R_i,R_j]=2i\Sigma_{\alpha\beta}^{ij}.\label{commutation}
\eeq
Here $\Sigma_{\alpha\beta}^{ij}$ are matrix elements of the symplectic matrix $\Sigma_{\alpha\beta}=\bigoplus_{i=1}^{n} J_i$ where
\beq J_i=\left(\begin{matrix}
0 &   1 \\
-1  & 0
\end{matrix} \right).
\eeq

A Gaussian state is defined by the mean of the vector of phase-space variables $\hat R$, as well as the covariance matrix (CM) $V_{\alpha\beta}$ for these variables.  The mean vector can be arbitrarily altered by local unitary operations and hence cannot determine the entanglement properties of $W$.  Thus for our purposes a Gaussian state is characterized by the CM. In (Alice, Bob) block form it appears as
\beq \label{cov-matrix}
{\rm CM}[W] =V_{\alpha\beta}=\left(\begin{matrix}
V_\alpha &   C \\
C\tp  & V_\beta
\end{matrix} \right).
\eeq
This represents a valid state \emph{iff} the linear matrix inequality (LMI) 
\beq 
V_{\alpha\beta}+i\Sigma_{\alpha\beta}\geq 0
\eeq is satisfied \cite{GieCirPRA02} . 

Rather than addressing steerability in general, we consider the case where Alice can only make  Gaussian measurements \cite{GieCirPRA02,GieEtalQIC03}, the set of which will be denoted by $\mathfrak{G}_\alpha$. Thus,  as for the previous section, since we are considering a restricted class of measurements, if we demonstrate steerability with this measurement scheme it will provide an \emph{upper bound} on $\eta_{\rm steer}$. 

A measurement $A\in \mathfrak{G}_\alpha$ is described by a Gaussian positive operator with a CM $T^A$ satisfying $T^A + i\Sigma_\alpha \geq 0$ \cite{GieCirPRA02,GieEtalQIC03}. When Alice makes such a measurement, Bob's conditioned state $\rho_a^A$  is Gaussian with a CM \cite{ZhoEtalORC96}
\beq \label{cndV}
{\rm CM}[\rho_a^A]=V_\beta^A =  V_\beta -  C\tp(V_\alpha+T^A)^{-1}C,
\eeq
which is actually independent of Alice's outcome $a$.

Our goal is to determine a sufficient condition for steerability of Gaussian states.  We do this by determining the necessary and sufficient condition for steerability with Gaussian measurements.  In the previous examples after specifying a measurement scheme we considered Bob's conditioned state if Alice were to perform a measurement and determined when it was possible for this state to be simulated by a cheating strategy.  In the following we are working toward the same goal.  If Alice were to perform a Gaussian measurement on half of the state $W$ and send the other part to Bob, then Bob's conditioned state would have a covariance matrix defined by \erf{cndV}; however, this is independent of Alice's result $a$. Thus we do not need to consider a strategy for Alice to announce correctly correlated results to Bob.  We simply need to determine when Alice could simulate the Bob's conditioned state by sending Bob states from a pure state ensemble (rather than actually sending part of $W$).  We will show that there exists an optimal ensemble of Gaussian states distinguished by their mean vectors (but sharing the same covariance matrix, which we will label $U$) which Alice could use for this task. If there exists a valid ensemble of Gaussian states defined by $U$ which can simulate $V_\beta^A$ then Bob will not believe that $W$ is entangled, and hence the state is not steerable.

Before moving to the presentation of our main result consider the following result from linear algebra theory relating to Schur complements of block matrices. The Schur complements of $P$ and $Q$ in a general block matrix
\beq \label{Schur}
B=\left(\begin{matrix}
P &   R \\
R\tp  & Q
\end{matrix} \right),
\eeq are defined as $\Delta_P=Q-R\tp P^{-1}R$ and $\Delta_Q=P-RQ^{-1}R\tp$ respectively.  The matrix $B$ is positive semidefinite (PSD), \emph{iff} both $P$ and its Schur complement are PSD (and likewise for $Q$ and its Schur complement).

The proof of our main theorem is based on the following inequality
\beq \label{LMIL}
V_{\alpha\beta} + {\bf 0}_\alpha\oplus i\Sigma_\beta \geq 0,
\eeq
and relies on the following facts:

\begin{lemma}\label{part1}
If \erf{LMIL} is true then there exists an ensemble defined by covariance matrix $U$ such that
\bqa
U+i\Sigma_\beta&\geq&0\label{LMI1},\\
V_\beta^A-U&\geq&0\label{LMI2},
\eqa
which implies that the state $W$ is not steerable.
\end{lemma}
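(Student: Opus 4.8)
The plan is to exhibit $U$ explicitly rather than argue abstractly for its existence. Assuming first that $V_\alpha$ is invertible, I would take $U$ to be the Schur complement of $V_\alpha$ in $V_{\alpha\beta}$, namely $U = V_\beta - C\tp V_\alpha^{-1}C$. This is precisely the greatest lower bound of the conditioned covariance matrices $V_\beta^A$ of \erf{cndV} over all Gaussian measurements, corresponding to the limiting (though not physically attainable) choice $T^A\to 0$. The degenerate case in which $V_\alpha$ is singular I would handle by replacing $V_\alpha^{-1}$ with the Moore--Penrose pseudoinverse, or by a limiting argument $V_\alpha\to V_\alpha+\varepsilon\mathbf I$ followed by $\varepsilon\to 0^+$ together with continuity of the relevant eigenvalues; this is the only genuinely technical wrinkle in the algebra.

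With this choice, \erf{LMI1} is immediate. I would invoke the Schur-complement criterion stated just after \erf{Schur}, applied to the Hermitian matrix in \erf{LMIL}, which reads $\left(\begin{matrix} V_\alpha & C \\ C\tp & V_\beta+i\Sigma_\beta\end{matrix}\right)\ge 0$ once one checks it is indeed Hermitian (using $V_\beta\tp=V_\beta$ and $\Sigma_\beta\tp=-\Sigma_\beta$). Since $V_\alpha>0$, that block matrix is PSD \emph{iff} its Schur complement with respect to $V_\alpha$ is PSD, and that complement is exactly $(V_\beta+i\Sigma_\beta)-C\tp V_\alpha^{-1}C = U+i\Sigma_\beta$. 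Hence \erf{LMIL} is equivalent to \erf{LMI1}.

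For \erf{LMI2} I would subtract the two covariance matrices directly, obtaining $V_\beta^A-U = C\tp[\,V_\alpha^{-1}-(V_\alpha+T^A)^{-1}\,]C$. A valid Gaussian measurement satisfies $T^A+i\Sigma_\alpha\ge 0$, and since $\Sigma_\alpha$ is real and antisymmetric this forces $T^A\ge 0$ (testing on real vectors annihilates the $\Sigma_\alpha$ term). Therefore $V_\alpha+T^A\ge V_\alpha>0$, and since matrix inversion is order-reversing on positive-definite matrices we get $(V_\alpha+T^A)^{-1}\le V_\alpha^{-1}$, so the bracket is PSD; conjugation by $C$ preserves this. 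Crucially the argument is uniform in $T^A$, so the single matrix $U$ dominates $V_\beta^A$ for \emph{every} Gaussian measurement simultaneously.

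Finally, to conclude non-steerability I would use the two inequalities exactly as set up before the lemma: \erf{LMI1} certifies that $U$ is a physical covariance matrix, so Bob's state can be drawn from the ensemble $\cu{\wp(\mathbf r)\,\rho_U(\mathbf r)}$ of Gaussian states all sharing CM $U$ with means $\mathbf r$ distributed by $\wp$; \erf{LMI2} says the conditioned CM exceeds $U$ by the PSD matrix $V_\beta^A-U$, which is precisely the classical-noise covariance needed so that convolving the $U$-ensemble over its means reproduces a Gaussian with CM $V_\beta^A$. Because $V_\beta^A$ is independent of Alice's outcome [cf.\ \erf{cndV}], she need only reproduce this single conditioned Gaussian, so taking $\xi=\mathbf r$ as the hidden variable with an appropriate announce map $\wp(a|\hat A,\xi)$ yields an LHS ensemble satisfying the Gaussian analogue of \erf{steering2} for all Gaussian $\hat A$, i.e.\ $W$ is not steerable by the considered measurements. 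I expect the main obstacle to be not the algebra but this last step: making rigorous that $V_\beta^A\ge U$ with $U$ a valid CM genuinely delivers an LHS decomposition matching both the outcome statistics and the conditioned states (including their means), while also confirming that the Schur-complement and monotonicity manipulations survive the degenerate $V_\alpha$ case.
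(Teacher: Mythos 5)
Your proposal is correct and follows essentially the same route as the paper's proof in Appendix~\ref{lemma2proof}: the same choice $U=V_\beta-C\tp V_\alpha^{-1}C$ obtained as the Schur complement of $V_\alpha$ in \erf{LMIL}, the same identity $V_\beta^A-U=C\tp[V_\alpha^{-1}-(V_\alpha+T^A)^{-1}]C$, and the same concluding construction of the Gaussian LHS ensemble with shared CM $U$ and Gaussian-distributed means. The only (minor) divergence is that you justify positivity of the bracketed term via antitonicity of matrix inversion on the positive-definite cone, where the paper invokes the Woodbury formula; both are sound, and your explicit treatment of singular $V_\alpha$ is a small refinement the paper omits.
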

\noindent\textbf{Proof}
See Appendix \ref{lemma2proof} for proof of this lemma.

\begin{lemma}\label{part2}
If the Gaussian state $W$ defined in \erf{cov-matrix} is {\em not} steerable by Alice's Gaussian measurements then there exists a Gaussian ensemble defined by covariance matrix $U$ such that \erf{LMI1} and \erf{LMI2} hold.
\end{lemma}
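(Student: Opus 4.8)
The goal is to establish the converse of Lemma~\ref{part1}: from the bare assumption that $W$ cannot be steered by Gaussian measurements, reconstruct a single covariance matrix $U$ obeying both \erf{LMI1} and \erf{LMI2}. The plan is to show that non-steerability, which a priori only guarantees \emph{some} local-hidden-state (LHS) model of the form \erf{steering2}, can always be upgraded to a \emph{Gaussian} LHS ensemble whose members share one common covariance matrix $U$, and that the two defining properties of such an ensemble are exactly \erf{LMI1} and \erf{LMI2}.

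First I would reduce the problem to second moments. The key structural fact, already noted below \erf{cndV}, is that for a Gaussian measurement Bob's conditioned covariance $V_\beta^A$ is independent of the outcome $a$; only the conditional mean depends on $a$. Because a Gaussian measurement on Bob's side probes nothing but the first and second moments of the state Alice delivers, I can replace each hidden state $\rho_\xi$ in the given model by the Gaussian state with the same mean and covariance $U_\xi$ without altering any probability entering \erf{steering2}. This ``Gaussifies'' the ensemble at no cost, and each $U_\xi$ is automatically a legitimate covariance matrix, $U_\xi + i\Sigma_\beta \geq 0$.

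Next I would homogenize the covariances to a common $U$. Writing the conditioned mixture via the law of total covariance, the postselected subensemble reproducing $V_\beta^A$ obeys $V_\beta^A = \sum_\xi \wp(\xi|a,A)\,U_\xi + (\text{covariance of the conditional means}) \geq \sum_\xi \wp(\xi|a,A)\,U_\xi$, so every averaged hidden covariance lies below $V_\beta^A$ in the Loewner order. If one can exhibit a single valid $U$, meaning $U + i\Sigma_\beta \geq 0$, with $U \leq U_\xi$ for all $\xi$, then each Gaussian hidden state of covariance $U_\xi$ may be written as a Gaussian of covariance $U$ convolved with classical displacement noise of covariance $U_\xi - U \geq 0$; absorbing that noise into the distribution of means leaves every reproduced conditioned state untouched but forces the ensemble to share the single covariance $U$. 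This gives \erf{LMI1} directly, and \erf{LMI2} follows because the reproduced $V_\beta^A$ is then $U$ plus a manifestly positive mean-spread term.

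The main obstacle is precisely the existence of that single valid $U$ lying below all the hidden covariances, equivalently below every $V_\beta^A$: the Loewner infimum of valid covariance matrices need not itself satisfy $U + i\Sigma_\beta \geq 0$, so $U$ cannot simply be taken as an infimum. The rigidity that saves the construction is that \emph{one fixed} ensemble must reproduce $V_\beta^A$ for the \emph{entire} family of Gaussian measurements at once; as $T^A$ is pushed to the boundary $\det(T^A + i\Sigma_\alpha)=0$ of the pure, maximally informative measurements, the quantity $V_\beta^A = V_\beta - C\tp(V_\alpha+T^A)^{-1}C$ is driven to its smallest values and squeezes the admissible hidden covariances from above in every quadrature. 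I would make this quantitative by casting the search for $U$ as a semidefinite feasibility problem --- find $U=U\tp$ subject to \erf{LMI1} and $V_\beta^A - U \geq 0$ for all $A$, the latter reduced to the extremal measurements by monotonicity of $V_\beta^A$ in $T^A$ --- and invoking a theorem of alternatives (semidefinite duality) for the resulting spectrahedron, in the same spirit as the Schur-complement analysis of blocks of the form \erf{Schur} used for Lemma~\ref{part1}. Infeasibility of the system would then hand back a dual certificate, which I would translate into an explicit Gaussian measurement and a second-moment inequality violated by Bob's true conditioned states beyond any LHS value, i.e.\ an explicit demonstration of steering that contradicts the hypothesis. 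Verifying that the displacement-noise covariances $U_\xi - U$ are genuinely positive, and that the reduced extremal-measurement constraint set really captures all of $\mathfrak{G}_\alpha$, are the routine checks I would complete last.
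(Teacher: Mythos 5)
Your opening move --- the law of total covariance, giving $V_\beta^A \geq \sum_\xi \wp(\xi|a,A)\,{\rm CM}[\rho_\xi]$ --- is exactly where the paper starts, but you then turn down a path that is both harder than necessary and not completed. The paper does not look for a single valid $U$ lying below every hidden covariance in the Loewner order; it simply defines $U := \sum_\xi \wp_\xi\, {\rm CM}[\rho_\xi]$, the covariance averaged over the \emph{prior} $\wp_\xi$. Then \erf{LMI1} is immediate, since $U + i\Sigma_\beta$ is a convex combination of the PSD matrices ${\rm CM}[\rho_\xi]+i\Sigma_\beta$, and \erf{LMI2} follows by multiplying your inequality by the normalization $\sum_\xi\wp_\xi\wp(a|A,\xi)$ and summing over $a$: because $V_\beta^A$ is independent of $a$ (the fact you yourself flag below \erf{cndV}) and $\sum_a \wp(a|A,\xi)=1$, the posterior $\wp(\xi|a,A)$ collapses back to the prior and one obtains $V_\beta^A \geq U$ directly, for every $A$, with a $U$ that does not depend on $A$. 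That single averaging step is the idea your proposal is missing.

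The gap in your route is the one you identify but do not close: a valid covariance matrix $U$ with $U\leq {\rm CM}[\rho_\xi]$ for all $\xi$ need not exist (the Loewner order admits no infimum in general, and imposing $U+i\Sigma_\beta\geq0$ on a putative common lower bound can fail), and the semidefinite-duality argument you sketch to rescue this is left as a program rather than a proof. You are in effect trying to prove the stronger statement that the LHS model itself can be realized by a common-covariance Gaussian ensemble; that is true, but in the paper it only emerges after the full Theorem (via Lemma \ref{part1} applied to \erf{LMIL}), not as the content of this lemma, which asks only for the existence of a $U$ satisfying the two LMIs. A secondary error: your claim that Gaussification comes ``at no cost'' is false as stated --- the outcome statistics of a Gaussian measurement (e.g.\ homodyne) on a non-Gaussian hidden state are not determined by its first and second moments alone, so replacing $\rho_\xi$ by a Gaussian with matched moments generally changes the probabilities entering \erf{steering2}. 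That step is also unnecessary: the paper's argument uses only the covariance matrices of the $\rho_\xi$, whatever those states are.
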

\noindent\textbf{Proof}
See Appendix \ref{lemma3proof}  for proof of this lemma.

\begin{lemma}\label{part3} If $\mathit{\forall\ A\in\mathfrak{G}}$ there exists $U$ such that \erfs{LMI1}{LMI2} hold, then \beq V_\alpha+T^A-C(V_\beta+i\Sigma_\beta)^{-1}C\tp \geq 0\label{lemma3},
\eeq must also hold.\end{lemma}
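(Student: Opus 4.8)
The plan is to eliminate the auxiliary covariance matrix $U$ from the hypotheses and recognise the resulting inequality as one Schur complement of a cleverly chosen Hermitian block matrix, whose \emph{other} Schur complement is exactly the target \erf{lemma3}. Everything then reduces to the Schur complement criterion recalled around \erf{Schur}.

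First I would combine the two hypotheses. Since $U+i\Sigma_\beta$ and $V_\beta^A-U$ are both Hermitian and positive semidefinite, adding \erf{LMI1} and \erf{LMI2} cancels $U$ and yields
\[
V_\beta^A + i\Sigma_\beta \geq 0 .
\]
Substituting the explicit conditioned covariance matrix $V_\beta^A = V_\beta - C\tp(V_\alpha+T^A)^{-1}C$ from \erf{cndV}, this is
\[
(V_\beta+i\Sigma_\beta) - C\tp(V_\alpha+T^A)^{-1}C \geq 0 .
\]
The key observation is that the left-hand side is precisely the Schur complement of the block $V_\alpha+T^A$ in the Hermitian block matrix
\[
B=\begin{pmatrix} V_\alpha+T^A & C \\ C\tp & V_\beta+i\Sigma_\beta \end{pmatrix},
\]
whereas the target \erf{lemma3} is exactly the Schur complement of the other diagonal block $V_\beta+i\Sigma_\beta$ in the \emph{same} $B$. (Here $i\Sigma_\beta$ is Hermitian because $\Sigma_\beta$ is real antisymmetric, so $B$ is genuinely Hermitian, and the Schur criterion of \erf{Schur} applies verbatim with $R\tp$ replaced by $R^\dagger$; since $C$ is real, $C^\dagger=C\tp$.)

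To run the criterion in the direction I need, I must also verify that the diagonal block $V_\alpha+T^A$ is positive semidefinite. This follows from the validity conditions: for any real vector $x$ one has $x\tp\Sigma_\alpha x=0$ since $\Sigma_\alpha$ is antisymmetric, so $V_\alpha+i\Sigma_\alpha\geq 0$ forces $x\tp V_\alpha x\geq 0$, i.e.\ $V_\alpha\geq 0$; likewise the measurement condition $T^A+i\Sigma_\alpha\geq 0$ gives $T^A\geq 0$, whence $V_\alpha+T^A\geq 0$. Having established that $V_\alpha+T^A\geq 0$ together with its Schur complement $V_\beta^A+i\Sigma_\beta\geq 0$, the criterion yields $B\geq 0$. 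Applying the criterion in the complementary direction to this positive semidefinite $B$ then gives that the Schur complement of $V_\beta+i\Sigma_\beta$ is positive semidefinite, which is precisely \erf{lemma3}. Because a suitable $U$ exists for every $A\in\mathfrak{G}$, the argument runs pointwise in $A$ and establishes \erf{lemma3} for all measurements.

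The step I expect to need the most care is not any single calculation but the handling of invertibility together with the Hermitian extension of the Schur complement lemma. Forming the Schur complement of $V_\beta+i\Sigma_\beta$ presupposes that this block is invertible; this is implicit in \erf{lemma3}, which already contains $(V_\beta+i\Sigma_\beta)^{-1}$, and holds for a faithful physical state, but a fully rigorous treatment should either assume strict positivity of $V_\beta+i\Sigma_\beta$ or pass to a limiting (pseudoinverse) argument. The remaining work---checking that $B$ is Hermitian and that the two Schur complements are identified correctly with $V_\beta^A+i\Sigma_\beta$ and with the expression in \erf{lemma3}---is routine bookkeeping.
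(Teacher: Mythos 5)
Your proof is correct and is essentially the paper's own argument: both establish positivity of the same block matrix $\bigl(\begin{smallmatrix} V_\alpha+T^A & C \\ C\tp & V_\beta+i\Sigma_\beta\end{smallmatrix}\bigr)$ from \erfs{LMI1}{LMI2} together with $V_\alpha+T^A\geq 0$, and then read off \erf{lemma3} as the complementary Schur complement. The only cosmetic difference is that you eliminate $U$ by adding the two LMIs before forming the block matrix, whereas the paper first recasts \erf{LMI2} as positivity of a block matrix with lower-right block $V_\beta-U$ and then adds $0_\alpha\oplus(U+i\Sigma_\beta)$; your remark on invertibility of $V_\beta+i\Sigma_\beta$ is a fair caveat that the paper also leaves implicit.
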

\noindent\textbf{Proof}
See Appendix \ref{lemma4proof}  for proof of this lemma.

We are now in a position to present our main theorem.

\begin{theorem}
The Gaussian state $W$ defined in \erf{cov-matrix} is {\em not} steerable by Alice's Gaussian measurements \emph{} \erf{LMIL} is true.
\end{theorem}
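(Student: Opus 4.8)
The plan is to read the displayed result as the equivalence ``$W$ is not steerable by Alice's Gaussian measurements $\iff$ \erf{LMIL}'' and to prove the two implications separately, using the three preceding lemmas together with the Schur-complement criterion recalled just before Lemma~\ref{part1}.

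For the direction \erf{LMIL} $\Rightarrow$ not steerable I would invoke Lemma~\ref{part1} essentially verbatim: assuming \erf{LMIL}, it produces an ensemble covariance matrix $U$ obeying \erf{LMI1} and \erf{LMI2}, and the discussion preceding the lemmas already explains why such a $U$ lets Alice reproduce Bob's conditioned covariance matrix $V_\beta^A$ of \erf{cndV} from a \emph{fixed} ensemble, so that $W$ cannot be steered. The natural witness is $U = V_\beta - C\tp V_\alpha^{-1}C$: then \erf{LMI1} is exactly \erf{LMIL} rewritten as the Schur complement of the Alice block of \erf{Schur}, while \erf{LMI2} holds for every measurement because $V_\alpha^{-1} \geq (V_\alpha + T^A)^{-1}$ (valid since $T^A \geq 0$). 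This direction is therefore immediate.

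For the converse, not steerable $\Rightarrow$ \erf{LMIL}, I would chain the remaining lemmas. Non-steerability supplies, through Lemma~\ref{part2}, an ensemble $U$ satisfying \erf{LMI1} and \erf{LMI2}; feeding this into Lemma~\ref{part3} yields \erf{lemma3}, i.e. $V_\alpha + T^A - C(V_\beta + i\Sigma_\beta)^{-1}C\tp \geq 0$, for every Gaussian measurement $A$. The target \erf{LMIL} is, via the Bob block of \erf{Schur}, equivalent to $V_\alpha - C(V_\beta+i\Sigma_\beta)^{-1}C\tp \geq 0$, which is precisely \erf{lemma3} with the measurement term $T^A$ removed. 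So the whole content is to collapse this $A$-indexed family of inequalities into that single one.

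The hard part is exactly this removal, and I expect it to be the main obstacle. One cannot simply set $T^A = 0$, since an admissible Gaussian measurement must obey $T^A + i\Sigma_\alpha \geq 0$ and $T^A=0$ violates the uncertainty principle; worse, the naive pointwise limit genuinely fails, because \erf{lemma3} can hold for every admissible $T^A$ while $V_\alpha - C(V_\beta+i\Sigma_\beta)^{-1}C\tp$ is \emph{not} positive. The way out is to exploit the full strength of non-steerability as delivered by Lemma~\ref{part2}: the ensemble $U$ there is a \emph{single} valid state that must lower-bound $V_\beta^A$ for \emph{all} measurements at once. I would therefore argue that $\tilde U := V_\beta - C\tp V_\alpha^{-1}C$ is the greatest lower bound, in the L\"owner order, of the conditioned matrices $\{V_\beta^A\}$ as $A$ ranges over Gaussian measurements -- equivalently that $C\tp V_\alpha^{-1}C$ is the least upper bound of $\{C\tp(V_\alpha + T^A)^{-1}C\}$ -- by constructing infinitely squeezed measurements along every Alice quadrature whose rank-deficient limits collectively force this bound. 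Then $U \leq V_\beta^A$ for all $A$ and $\tilde U$ being the greatest such lower bound give $U \leq \tilde U$, so $\tilde U + i\Sigma_\beta \geq U + i\Sigma_\beta \geq 0$ by \erf{LMI1}, which is \erf{LMIL}. This least-upper-bound step is where the uncertainty constraint on $T^A$ bites and where the single-ensemble nature of non-steering (rather than a measurement-by-measurement condition) is indispensable, so I would devote the bulk of the argument to it.
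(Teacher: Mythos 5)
Your first direction coincides with the paper's: invoking Lemma~\ref{part1} with the witness $U = V_\beta - C\tp V_\alpha^{-1}C$ is exactly what Appendix~\ref{lemma2proof} does, including the observation $V_\alpha^{-1}\geq (V_\alpha+T^A)^{-1}$. For the converse you have correctly located the crux --- passing from ``\erf{lemma3} holds for every admissible $T^A$'' to the measurement-free inequality \erf{LMIL} --- but the mechanism you propose for that step fails as stated. The family $\cu{C\tp(V_\alpha+T^A)^{-1}C}$ does \emph{not} have $C\tp V_\alpha^{-1}C$ as a least upper bound in the L\"owner order: under infinite squeezing, $(V_\alpha+T^A)^{-1}$ converges not to $V_\alpha^{-1}$ but to the rank-deficient matrix $uu\tp/(u\tp V_\alpha u)$ ($u$ the squeezed direction), and the L\"owner order is not a lattice, so ``the greatest lower bound of $\cu{V_\beta^A}$'' need not exist, let alone equal $\tilde U := V_\beta - C\tp V_\alpha^{-1}C$. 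What actually rescues your route is the scalar identity $\sup_u (w\tp u)^2/(u\tp V_\alpha u) = w\tp V_\alpha^{-1} w$ (Cauchy--Schwarz, optimum at $u\propto V_\alpha^{-1}w$) applied separately to each real direction $w = Cv$: it gives $v\tp U v \le \inf_A v\tp V_\beta^A v \le v\tp \tilde U v$ for every real $v$, and since $\tilde U - U$ is real symmetric this quadratic-form statement already yields $U\le\tilde U$ and hence \erf{LMIL} via \erf{LMI1}. You gesture at this construction (``infinitely squeezed measurements along every Alice quadrature'') but rest the logic on a matrix-order supremum that does not exist, so the step you yourself flag as the bulk of the work is the one that is missing.

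The paper closes the converse differently, by contraposition: if \erf{LMIL} fails then (since $V_\beta+i\Sigma_\beta\geq 0$) the Schur complement $G=V_\alpha - C(V_\beta+i\Sigma_\beta)^{-1}C\tp$ has a negative eigenvalue $-g$ with eigenvector $\vec{\nu}$; one then exhibits a \emph{single} squeezed measurement with $T^A\vec{\nu}=t\vec{\nu}$ and $t<g$, so that \erf{lemma3} fails for that $A$, whereupon Lemma~\ref{part3} denies the existence of any $U$ satisfying \erfs{LMI1}{LMI2} for that measurement and Lemma~\ref{part2} forces steerability. Note that your parenthetical claim that \erf{lemma3} ``can hold for every admissible $T^A$'' while $G\ngeq 0$ directly contradicts this step of the paper's argument; it is true for an arbitrary Hermitian $G$ (e.g.\ $G=-\half i\Sigma_\alpha$), so if you want to press that objection you must explain why it cannot occur for the particular $G$ arising from a valid $V_{\alpha\beta}$ --- otherwise the paper's single-measurement witness makes your entire infimum detour unnecessary.
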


\begin{proof} 
By Lemma 2 we know that if \erf{LMIL} is true then $W$ is not steerable.

Now suppose that \erf{LMIL} does not hold, so that
\beq
V_{\alpha\beta}+0_\alpha\oplus i\Sigma_\beta \ngeq 0.
\eeq
Since we know that $V_\beta+i\Sigma_\beta \geq 0$ (that is, $V_\beta$ is a valid covariance matrix), the Schur complement in this term cannot be PSD (if it were it would imply that \erf{LMIL} were true when we have assumed the opposite).  Thus we have
\beq G=\Delta_{V_\beta+i\Sigma_\beta}=V_\alpha-C\left(V_\beta+i\Sigma_\beta\right)^{-1}C\tp \ngeq 0.
\eeq
Consider an eigenvector of $G$, $\vec{\nu}$, associated with a negative eigenvalue, that is, $G\vec{\nu}=-g\vec{\nu}$ where $g > 0$. We can choose a measurement $A$ along an axis such that $T^A$ shares the eigenvector $\vec{\nu}$ so that $T^A \vec{\nu}=t\vec{\nu}$.  Now it is possible to arrange the measurement such that $t < g$. This is because it is always possible to make one eigenvalue suitably small (the eigenvalue for the conjugate variable will become large). Now we have chosen the measurement such that $G+T^A$ must have a negative eigenvalue in the $\vec{\nu}$ direction. Hence, for this choice of measurement we have \beq V_\alpha+T^A-C\left(V_\beta+i\Sigma_\beta\right)^{-1}C\tp \label{SchurNeg}\ngeq 0.
\eeq  This shows that if \erf{LMIL} does not hold then there exists a measurement $A$ such that \erf{lemma3} does not hold, which by Lemma \ref{part3} implies that for this measurement there does not exist an ensemble defined by $U$ such that \erfs{LMI1}{LMI2} hold. 

However, from Lemma \ref{part2} we know that if $W$ is not steerable then there must exist an ensemble defined by $U$ such that these equations hold. %do not hold then it is not possible to define an ensemble $U$ to prevent steering. 
Since they do not hold for all $A$ when \erf{LMIL} is not true, we see that if \erf{LMIL} is not true then we cannot define a suitable ensemble $U$ to prevent steering. Therefore, a Gaussian state $W$ is \emph{not} steerable \emph{iff} \erf{LMIL} is true.
\end{proof}

Theorem 5 provides a sufficient condition for demonstrating that the state $W$ is steerable (by any measurements), and hence specifies an \emph{upper} bound on $\eta_{\rm steer}$. To illustrate this, it is useful to consider a simple example.

\subsubsection{Two mode states and the EPR paradox}
We now consider the simplest case where Alice and Bob share a Gaussian state $W$ in which they each have a single mode. It is well known that such a Gaussian state can be brought into standard form using local linear unitary Bogoliubov operations (LLUBOs), so that the CM takes the form \cite{DuaEtalPRL00}
\beq
V_{\alpha\beta}= \left(%
\begin{array}{cccc}
n & 0 & c & 0  \\
  0& n &0 & c' \\
c & 0 & m &0 \\
 0& c' &0& m \\
\end{array}
\right),\label{GaussStandardCovariance}
\eeq
where $n,m \geq 1$.  
%In this case we simply denote the mode operators as $\hat{q}_\alpha=\hat{a}+\hat{a}^\dagger$, $\hat{p}_\alpha=-i(\hat{a}-\hat{a}^\dagger)$ for Alice's modes and $\hat{q}_\beta=\hat{b}+\hat{b}^\dagger$, $\hat{p}_\beta=-i(\hat{b}-\hat{b}^\dagger)$ for Bob's. Here $\hat{a},\hat{b}$ and $\hat{a}^\dagger,\hat{b}^\dagger$ are annihilation and creation operators respectively for the two modes. The commutation relations can thus be simply expressed in the form of \erf{commutation}.

The Peres-Horodecki criterion for separability can be written as a linear matrix inequality for Gaussian states \cite{SimPRL00} as
\beq \tilde{V}_{\alpha\beta}+i\Sigma_{\alpha\beta} \geq 0,
\label{GaussSep}\eeq
where $\tilde{V}_{\alpha\beta}=\Lambda V_{\alpha\beta} \Lambda$; $\Lambda={\rm diag}(1,1,1,-1)$.
This can be determined by finding when the Schur complement (of the lower block) of $\tilde{V}_{\alpha\beta}+i\Sigma_{\alpha\beta}$ is PSD, which occurs only when
\beq
\left(m-\frac{c^2n}{n^2-1} \right)\left(m-\frac{{c'}^2n}{n^2-1} \right) \geq \left(1- \frac{cc'}{n^2-1} \right)^2. \label{GaussSepGeneral}
\eeq Hence two-mode Gaussian states defined by $V_{\alpha\beta}$ are separable \emph{iff} \erf{GaussSepGeneral} is satisfied.

%all the eigenvalues of the expression are non-negative. A straightforward calculation reveals that the eigenvalues are $\lambda_1=\gamma+\delta+1$, $\lambda_2=\gamma+\delta-1$, $\lambda_3=\gamma-\delta+1$, $\lambda_4=\gamma-\delta-1$. Clearly since $\gamma,\delta > 0$, the smallest eigenvalue is $\lambda_4$. Thus, \erf{GaussSep} will be violated when $\lambda_4 < 0$. This leads to the condition for states defined by \erf{GaussCovariance} to be \emph{nonseparable} 
%\beq
%\eta>\eta_{\rm ent}=\sqrt{\frac{\bar{n}}{1+\bar{n}}}.
%\label{GaussInsep}\eeq

For Gaussian states, which have a positive Wigner function, it is not possible to demonstrate violation of a Bell inequality using Gaussian measurements.  This is because the Wigner function gives an explicit hidden variable description which ensures satisfaction of Bell's inequality. %Hence for the measurements we consider, it is not possible to demonstrate violation of a Bell inequality for the Gaussian state $W_2$.

To determine if the state $W$ is steerable %the steerability boundary for these states 
it is a simple matter of testing if $V^{\alpha\beta}+\mathbf{0}_\alpha\oplus i\Sigma_\beta$ is PSD.  Again using Schur complements, we find that this is the case \emph{iff}
\beq
\left(m-\frac{c^2}{n} \right)\left(m-\frac{{c'}^2}{n} \right) \geq 1. \label{GaussSteerGeneral}
\eeq %and Gaussian states defined by $V^{\alpha\beta}$ are steerable by Gaussian measurements \emph{iff} this condition is \emph{not} satisfied.

%The eigenvalues of this matrix are given by
%\beq \lambda_i=\frac{2\gamma\pm 1\pm\sqrt{1+4\delta^2}}{2},
%\eeq where again since $\gamma,\delta > 0$ it is clear that the eigenvalue with both the negative signs is the smallest. Thus the state $W_2$ is steerable if this eigenvalue is negative. Testing this leads to an upper bound on the condition for steerability
%\beq\eta > \sqrt{\frac{1+2\bar{n}}{2(1+\bar{n})}}\geq \eta_{\rm steer}
%.\label{GaussSteerCond2}\eeq
%
%Note that this is only an upper bound as we have considered a restricted class of measurements. However, it is straightforward to verify (by comparing \erf{GaussInsep} and \erf{GaussSteerCond2}) that $\eta_{\rm steer} \geq \eta_{\rm ent}$ for all $\bar{n}$.  This is shown explicitly for some small values of $\bar{n}$ in Fig. \ref{combinedfigures}(d).  Since it is not possible to demonstrate Bell nonlocality for a Gaussian state with Gaussian measurements we have also plotted an upper bound on $\eta_{\rm Bell}$ in Fig. \ref{combinedfigures}(d).

%\subsubsection{The EPR paradox}
Recall that the interest in, and even the name, \emph{steering} arose in response to the EPR paradox.  Therefore, one would expect that any reasonable characterization of steering should include the EPR paradox.  This is indeed the case for our formulation of steering.
For the class of two-mode Gaussian states that we have been considering,  Reid \cite{ReiPRA89} has argued that the  EPR ``paradox'' is demonstrated \emph{iff} the product of the conditional variances $V(q_\beta|q_\alpha)$ and
$V(p_\beta|p_\alpha)$  violates the uncertainty principle.  
This is the case if the conditional variances do \emph{not} satisfy
\beq
V(q_\beta|q_\alpha)V(p_\beta|p_\alpha)\geq 1.\label{HUP}
\eeq
For a general two-mode Gaussian state $W$ the conditional variances take the form,
\beq
V(q_\beta|q_\alpha)=\ {}_{\mu}^{\!\!\rm min}[(q_\beta-\mu q_\alpha)^2]=
%&=&{}_{\mu}^{\!\! \rm min}[m -2\mu c+\mu^2n]\nonumber\\
m-\frac{c^2}{n},
\eeq
and similarly for $V(p_\alpha|p_\beta)$. Thus \erf{HUP} is exactly \erf{GaussSteerGeneral}.
 That is, the EPR ``paradox'' occurs precisely when $W$ is 
 steerable with Gaussian measurements.  This example confirms
that the EPR ``paradox'' is merely a particular case of steering. As is well known \cite{BowEtalPRA04}, Reid's EPR condition is strictly stronger than the condition for nonseparability \erf{GaussSepGeneral}. The fact that the EPR ``paradox'' is an example of steering explains why the EPR condition is stronger than nonseparability; as we have shown in previous examples steering is a  strictly stronger concept than nonseparability.

\subsubsection{Symmetric two-mode states}\label{GaussSym}

Finally we consider the specific case of 
%Ideally we would calculate the entanglement properties of this general two-mode Gaussian state.  However, the analytical expression for steerability in this general case is difficult to work with. Instead we note that the covariance matrix for 
two-mode Gaussian states prepared by optical parametric amplifiers \cite{BowEtalPRA04}. When the entanglement is symmetric between the two modes the covariance matrix describing such states has a particularly simple form.  The continuous variable entanglement properties of such a state has recently been characterized experimentally \cite{BowEtalPRA04}. In this case the covariance matrix of the state $W$ has just two parameters, $\eta$ and $\bar{n}$: 
\beq
{\rm CM}[W_{\bar n}^\eta]=V_2^{\alpha\beta}= \left(%
\begin{array}{cccc}
 \gamma & 0 & \delta& 0  \\
  0&\gamma &0 & -\delta\\
 \delta&0&\gamma&0 \\
 0&-\delta&0&\gamma \\
\end{array}
\right),\label{GaussCovariance}
\eeq
where $\gamma=1+2\bar{n}$ and $\delta = 2\eta\sqrt{\bar{n}(1+\bar{n})}$. 
Here  $\bar{n}$ is the mean photon number for each party, and $\eta$ is a mixing parameter
defined analogously with the other examples except that here it is the covariance
matrix that is linear in $\eta$, not the state matrix.%the parameter $\eta$ is a mixing parameter defined analogously with the other examples.  
%The mode operators are $q_\alpha=\hat{a}+\hat{a}^\dagger$, $p_\alpha=-i(\hat{a}-\hat{a}^\dagger)$ for Alice's modes and $q_\beta=\hat{b}+\hat{b}^\dagger$, $p_\beta=-i(\hat{b}-\hat{b}^\dagger)$ for Bob's. $\hat{a},\hat{b}$ and $\hat{a}^\dagger,\hat{b}^\dagger$ are annihilation and creation operators respectively for the two modes. For convenience we define $R=(q_\alpha,p_\alpha,q_\beta,p_\beta)$. The commutation relations for the mode operators can thus be compactly expressed as
%\beq [R_i,R_j]=2i\Sigma^{ij}_{\alpha\beta},\ \ \ \ \ \ \ \ i,j=1,2,3,4 \label{RR}
%\eeq
%where $\Sigma_{\alpha\beta}$ is now a $4\times4$ matrix as defined in \erf{SigmaAB} and $\Sigma^{ij}_{\alpha\beta}$ denotes a matrix element of $\Sigma_{\alpha\beta}$.

For such symmetric states the separability condition, \erf{GaussSepGeneral}, becomes
\beq
\left(\gamma-\frac{\delta^2\gamma}{\gamma^2-1} \right)^2 \geq \left(1+ \frac{\delta^2}{\gamma^2-1} \right)^2.
\eeq
Substituting for the values of $\gamma$ and $\delta$ we find that the condition for states defined by \erf{GaussCovariance} to be \emph{nonseparable} is simply
\beq
\eta>\eta_{\rm ent}=\sqrt{\frac{\bar{n}}{1+\bar{n}}}.
\label{GaussInsep}\eeq

%For Gaussian states with a positive Wigner function it is not possible to demonstrate violation of a Bell inequality using Gaussian measurements.  This is because the Wigner function acts as an explicit hidden variable description which ensures satisfaction of Bell's inequality. Hence for the measurements we consider, it is not possible to demonstrate violation of a Bell inequality for the Gaussian state $W_2$.

In determining when symmetric two-mode Gaussian states are steerable, we find that \erf{GaussSteerGeneral} becomes
\beq
\left(\gamma-\frac{\delta^2}{\gamma} \right)^2  \geq 1.
\eeq
%
%To determine if the state $W_2$ is steerable %the steerability boundary for these states 
%it is a simple matter of testing if $V_2^{\alpha\beta}+\mathbf{0}_\alpha\oplus i\Sigma_\beta$ is PSD. The eigenvalues of this matrix are given by
%\beq \lambda_i=\frac{2\gamma\pm 1\pm\sqrt{1+4\delta^2}}{2},
%\eeq where again since $\gamma,\delta > 0$ it is clear that the eigenvalue with both the negative signs is the smallest. Thus the state $W_2$ is steerable if this eigenvalue is negative. Testing this leads 
Hence, as an upper bound on the condition for steerability we have
\beq\eta > \sqrt{\frac{1+2\bar{n}}{2(1+\bar{n})}}\geq \eta_{\rm steer}
.\label{GaussSteerCond2}\eeq
This is an upper bound as we have only considered a restricted class of measurements. These results are plotted  for some small values of $\bar{n}$ in Fig. \ref{combinedfigures}(d).  Since it is not possible to demonstrate Bell nonlocality for a Gaussian state with Gaussian measurements we have also plotted an upper bound on $\eta_{\rm Bell}$ in Fig. \ref{combinedfigures}(d).

\section{Discussion\label{sec:disc}}
We have introduced a rigorous formulation of the concept of steering and given a number of examples to demonstrate where this concept fits in the hierarchy of entangled states. Both our operational and mathematical formulations of steering 
%in terms of a task 
leads to the notion that steerable states lie between nonseparable states and those entangled states which violate a Bell inequality.  In particular our example for $2\times 2$ Werner states establish that this is a \emph{strict} hierarchy.  Our other examples are consistent with this fact.

Recently there has been renewed interest in classifying the resources present in quantum states. For instance, it has been proposed that nonlocality itself is a separate resource from entanglement (see \cite{BruEtalNJP05} and references therein).  This has been motivated by the fact that for suggested measures of nonlocality, the maximally nonlocal states are not necessarily maximally entangled states.  Our work provides an interesting addition to the increasingly complex task of characterizing quantum resources. Clearly steerability is another form of nonlocality that a quantum state may possess.

The nonlocality of entangled states has also recently been studied in the context of robustness to noise.  Ref. \cite{AciEtalPRL07} determines the maximum amount of noise that an arbitrary bipartite state can accept before its nonlocal correlations  (i.e. its ability to violate a Bell inequality) are completely ``washed out".  They do this by determining when the resulting state's correlations can be explained by a ``local model".  In fact, the local models defined in Ref. \cite{AciEtalPRL07} correspond to LHS models for Bob in our terminology. That is, as they recognize \cite{AciEtalPRL07}, the concept of steering is useful for proving new bounds for Bell-nonlocality, since the latter is strictly stronger.

The inherent asymmetry in the definition of steerability may suggest applications for asymmetric entangled states.  It may appear that a link exists between the recently proposed asymmetric measures of entanglement \cite{HorEtalARX05} and steerability.  While conceptually appealing, this seems unlikely as states with asymmetric entanglement as defined in Ref. \cite{HorEtalARX05} 
necessarily contain bound entanglement.  A connection between steerable states and bound entangled states is unlikely, as we have shown that steerable states exist for $d=2$ (and no bound entangled states exist for $d=2$).

There remain a number of open questions relating to steerability.  We have demonstrated the link between the EPR paradox and steerability for two-mode Gaussian states.  The EPR paradox has been demonstrated experimentally for Gaussian states, however it is difficult to prepare an EPR type experiment for other quantum states. This raises the question: might tests of steerability provide experimental evidence for EPR type correlations in non-optical experimental implementations?

From an experimental perspective, the question as to whether it is possible to define a steerability witnesses or operators (in analogy with entanglement witnesses and Bell operators) is particularly appealing.  This would provide a straightforward experimental test for determining if a given state is steerable.  Such a test would simultaneously demonstrate that the given state is entangled.  This will be addressed in future work.

{ Finally, our operational definition of steering in terms of a task involving exchanges of quantum systems with an untrusted partner is reminiscent of the scenarios common in quantum complexity theory such as interactive proof systems and other kinds of quantum games (see for example~\cite{GutWatARX07}). We do not know of a direct way of mapping steering as we have defined it here onto these problems but it is interesting to ask if steering may play a role in some way comparable to Bell inequality violation in \cite{CleEtalARX04} for example. Secondly is it possible to define some useful quantum protocol for which the class of steerable states is useful; that is, is there a task for which nonseparable states are an insufficient resource, but steerable states allow the protocol to be implemented?}
%Finally, our operational definition of steering in terms of a task is reminiscent of Merlin-Arthur type scenarios common in computational complexity analyses. This raises two questions: first, does the class of steerable states fall into some known complexity class?
%\note{--Do you agree with this Andrew?  Could you possibly add a relevant reference and elaborate? Is there anything else you can think of that would be appropriate for the discussion section?-- Steve.} Secondly is it possible to define some useful quantum protocol for which the class of steerable states is useful; that is, is there a task for which nonseparable states are an insufficient resource, but steerable states allow the protocol to be implemented?

We conclude by commenting that we expect the answers to these questions (and others) to prove steering a useful concept in the context of quantum information science.

\acknowledgments{This work was supported by the Australian Research Council and the  State of Queensland.
We would like to thank Rob Spekkens, Volkher Scholz, Antonio Ac\`in, and Michael Hall for useful discussions.}

\appendix

\section{EPR-correlations, entanglement, and steering: a history of terms}\label{App:history}

As stated in the Introduction, although Werner's 1989 paper \cite{WerPRA89} is often cited as introducing the dichotomy of entangled versus separable states,  it is 
important (for the discussion in this Appendix) to note that he used neither the 
term entangled nor the term separable. These seem to have not been used in their presently accepted sense until 1996, by Bennett \ea\ \cite{BenEtalPRA96}, and Peres \cite{PerPRL96} respectively. Rather, he used the terms 
``EPR-correlated states'' versus ``classically correlated states''.   His main result, restated  
in these terms, was that some ``EPR-correlated states'' %admit a description by LHVs and so 
conform with Bell's concept of ``locality''.

Our recent work \cite{WisEtalPRL07} also considered the issue of mixed states and EPR correlations.  Specifically, we rigorously defined the class of states that can be used to demonstrate the nonlocal effect which EPR identified in 1935. 
Contrary to Werner's {\em terminology}, we established that the set of such EPR-correlated states is {\em not} by definition complementary to the set of locally preparable states. Using this concept of EPR-correlated states, the main results of our paper can, ironically, be expressed 
entirely in statements contradicting Werner's {\em natural-language descriptions} of his results.  
First: It is true (as Werner states) that some EPR-correlated states respect Bell-locality, but, contrary to his (natural-language) claims, Werner did {\em not} prove this. Our proof \cite{WisEtalPRL07} of this fact makes use of Werner's result, but also requires the much more recent result of 
Acin \ea\ \cite{AciEtalPRA06}. Second: what Werner's result {\em actually} proves, contrary to his stated dichotomy, is that some states that are not separable (classically correlated)   
are nevertheless {\em not} EPR-correlated (from which one can conclude also that they are Bell-local). To summarize: we used Werner's result to help 
prove that the set of Bell-nonlocal states is a {\em strict} subset
of the set of EPR-correlated states, which in turn is a {\em strict} subset of the set of nonseparable states.

We emphasize that we are not disputing at all the mathematical validity of Werner's result, nor its importance, 
nor his understanding of it. We dispute only his use of the term ``EPR-correlated
states'' to refer to nonseparable states, which he says ``is to emphasize the crucial role of such states in the 
Einstein-Podolsky-Rosen paradox and for the violations of Bell's inequalities''.   This explanation for the name 
could equally be used to justify calling nonseparable states ``Bell-correlated states'', but that would be nonsensical 
since the point of Werner's paper is that, in the mixed-state case, not all nonseparable states can exhibit correlations that violate a Bell's inequality.
Similarly, we maintain that if the term ``EPR-correlated states'' were to be applied to mixed states, then it should be reserved for those 
states for which the correlations can actually be used to demonstrate
the EPR paradox. Prior to our Letter, 
no rigorous and general definition of this paradox had been given, and so no good definition of 
``EPR-correlated states'' existed. Giving such a definition is no mere semantic exercise; 
as stated in the preceding paragraph, our work identifies this as a new class of quantum 
states, distinct both from the Bell-nonlocal ones and the nonseparable ones.

During %in 
the past decade \sch's term ``entangled states'' has replaced Werner's term ``EPR-correlated states'' 
(which he credited to Primas \cite{Pri84}) as a synonym for nonseparable states. Nevertheless, there is still potential for confusion if we were to 
promote the term ``EPR-correlated'' for the new class of states we defined and 
categorized in Ref.~\cite{WisEtalPRL07}. For that reason we proposed instead the term ``steerable'' for this class of states, a term that has been used increasingly in recent years
\cite{VujHerJPA88,VerPhD02,CliEtalFoP03,SpePRA07,KirFPL06}.

\section{Isotropic state steering}\label{GaussApp}

\subsection{Optimal ensemble}\label{App:optens}
First choose an orthonormal basis $\ket{1},\ket{2},...,\ket{d}$
to describe the uniform ensemble $F^\star$. Then consider randomly
generated unnormalized states \beq \ket{\tilde{\psi}}=\frac{1}{\sqrt{d}}\sum_{j=1}^d
z_j\ket{\phi_j}\label{psirandom},\eeq where $z_j$ are zero-mean Gaussian random variables with the properties $\langle
z^*_jz_k\rangle=\delta_{j,k}$ and $\langle z_jz_k\rangle=0$. Writing $\ket{\tilde{\psi}}=m\ket{\psi}$, we denote the measure for this ensemble as $d\mu_G(\psi,m)$.  From \erf{psirandom} it is straightforward to see \beq
\hat{U}\ket{\tilde{\psi}}=\frac{1}{\sqrt{d}}\sum_{j,j'=1}^d
z_jU_{jj'}\ket{\phi_{j'}}=\frac{1}{\sqrt{d}}\sum_{j'=1}^d
w_{j'}\ket{\phi_{j'}},\label{isounitarity}\eeq where $w_{j'}=\sum
U_{jj'}z_j$. Due to unitarity, $\langle
w_{j'}^*w_{k'}\rangle=\delta_{j'k'}$ and $\langle
w_{j'}w_{k'}\rangle=0$ (that is, the $w$s satisfy the same
statistical relations as the $z$s). Hence \beq
d\mu_G(\psi,m)=d\mu_G(\hat{U}\psi,m),\ \forall \hat{U},\eeq  which means that the measure factorizes into a constant measure over $\psi$
(the Haar measure) and a measure over the weightings $m$, and can be written as \beq
d\mu_G(\psi,m)=d\mu_{\rm Haar}(\psi)d\mu_G(m).\eeq Hence instead of simply using the Haar measure $d\mu_{\rm Haar}$ to describe the distribution of the ensemble $F^\star$ we may use the Gaussian measure $d\mu_G(\psi,m)$.

For simplicity we go on to define $z=\sqrt{u}e^{i\theta}$ so that $d\mu_G(\psi,m)$ becomes
\bqa \wp(u_1,...,u_d,\theta_1,...,\theta_d)du_1...du_d d\theta_1...d\theta_d \ \ \ \ \ \ \ \ \nonumber\\
 =\frac{1}{(2\pi)^d} \exp\left(-\sum_{i=1}^d u_i\right) du_1...du_d d\theta_1...d\theta_d
\eqa which is normalized:
\beq
%\hspace{-1cm}\int du_1...du_d d\theta_1...d\theta_d \wp(u_1,...,u_d,\theta_1,...,\theta_d)\hspace{-7.25cm} \nonumber\\&= & 
\frac{1}{(2\pi)^d} \int_0^\infty du_1...du_d\exp\left(-\sum_{i=1}^d u_i\right)\int_0^{2\pi} d\theta_1...d\theta_d =1.%  \nonumber\\
%&=&1.%\hspace{7.3cm} 
\eeq

\subsection{Normalization term}\label{App:norm}

The denominator of \erf{IsoAv} evaluates to
\bqa \int d\mu(m)m^2&=& \int\int d\mu(m)m^2 d\mu_{\rm Haar}(\psi)\nonumber\\
&=&\int d\mu_G(\psi,m)\langle\tilde{\psi}|\tilde{\psi}\rangle\nonumber\\
&=&\int d\mu_G(\psi,m)\sum_{i,j=1}^d\frac{z_i^\ast z_j}{d}\bra{\phi_i}\phi_j\rangle\nonumber\\
&=&\int d\mu_G(\psi,m)\sum_{i=1}^d\frac{|z_i|^2}{d}\nonumber\\
&=&{\rm E}_G\left[\sum_{i=1}^d\frac{|z_i|^2}{d}\right]=1.
\eqa
We have used the facts that $\int d\mu_{\rm Haar}(\psi)=1$ and $m^2=\langle\tilde{\psi}|\tilde{\psi}\rangle$, while ${\rm E}_G[x]$ denotes the expected value of $x$ with respect to the Gaussian measure $d\mu_G(\psi,m)$.

\subsection{Evaluating \erf{H_d}}\label{App:Int}

To calculate the integral in \erf{H_d} we need the limits of integration.  These are determined by 
considering the states in Hilbert space which are closer to
$\ket{a}\bra{a}$ than any other basis state as outlined in \erf{IsoMethod}. We can choose the orthornormal basis in \erf{psirandom} such that $\ket{a}$ is one of the basis states.    Since the states $\ket{\tilde{\psi}}$ are unnormalized we must perform the the integral relating to $\ket{a}$ over the complete range from
0 to $\infty$. That is, the coefficient $u_a$ associated with $\ket{a}$ ranges from 0 to $\infty$. However, since this must be the largest parameter, the
other $u_i$ must only range from 0 to $u_a$. Using these limits the integral $\int_ad\mu_G(\psi,m)|\bra{a}\tilde{\psi}\rangle|^2$ becomes
\bqa &&\!\!\!\!\!\!\!\!\!\!\!\!\int_ad\mu_G(\psi,m)\left|\bra{a}\frac{1}{\sqrt{d}}\sum_{j=1}^dz_j\ket{\phi_j}\right|^2\nonumber\\
&=&\!\!\frac{1}{d}\int_ad\mu_G(\psi,m)|z_a|^2\nonumber\\
&=&\!\!\frac{1}{d}\int_ad\mu_G(\psi,m)u_a\nonumber\\
&=&\frac{1}{d(2\pi)^d}\int_0^\infty du_a u_a\int_0^{u_a}du_2 ...
\int_0^{u_a}du_d\nonumber\\
&&\phantom{\frac{1}{d(2\pi)^d}\int}\times\int_0^{2\pi}d\theta_1...\int_0^{2\pi}d\theta_d \exp\left(-\sum_{i=1}^d
u_i\right)\nonumber\\
&=&\frac{(2\pi)^d}{d(2\pi)^d}\int_0^\infty du_au_ae^{-u_a}
\left(\int_0^{u_a}due^{-u}\right)^{d-1}\nonumber\\
&=&\frac{1}{d}\int_0^\infty du_au_ae^{-u_a}
\left(1-e^{-u_a}\right)^{d-1}\nonumber\\
&=&\frac{1}{d}\int_0^\infty du_au_ae^{-u_a}
\sum_{k=0}^{d-1}\binom{d-1}{k}\left(-e^{-u_a}\right)^k\nonumber\\
&=&\frac{1}{d}\sum_{k=0}^{d-1}(-1)^k\binom{d-1}{k}\left[\frac{1}{(k+1)^2}\right]\nonumber\\
&=&\frac{1}{d}\sum_{k=1}^{d}(-1)^{k-1}\binom{d-1}{k-1}\frac{1}{k^2}\nonumber\\
&=&\!\!\frac{1}{d^2}\sum_{k=1}^d\frac{(-1)^{k-1}}{k}\binom{d}{k}\equiv \epsilon_d\label{Appeps_d}.\eqa
It is possible to further simplify $\epsilon_d$. This can be done by considering the following expression %$(1/d^2)\int_0^1dx[1-(1-x)^d]/x$ which can be evaluated as 
\bqa \frac{1}{d^2}\int_0^1dx\frac{1-(1-x)^d}{x} &=&\frac{1}{d^2}\int_0^1dx\frac{1-\sum_{k=0}^d
\binom{d}{k}(-x)^k}{x}\nonumber\\
&=&\frac{1}{d^2}\sum_{k=1}^d
\frac{(-1)^{k+1}}{k}\binom{d}{k}.
\eqa 
It is not immediately obvious that the above integral is a simpler expression for $\epsilon_d$.  However, this expression can be evaluated alternatively using the subsitution $y=1-x$, which gives
\bqa \epsilon_d &=&\frac{1}{d^2}\int_0^1dy\frac{1-y^d}{1-y}\nonumber\\
&=&\frac{1}{d^2}\int_0^1dy(1-y^d)\sum_{k=0}^\infty y^k\nonumber\\
&=&\frac{1}{d^2}\left(\sum_{k=0}^{d-1}\frac{1}{k+1}+ \sum_{k=d}^\infty\frac{1}{k+1}-\sum_{k=0}^\infty\frac{1}{k+d+1}\right)\nonumber\\
&=&\frac{1}{d^2}\sum_{k=1}^{d}\frac{1}{k},
\eqa which is the result in \erf{H_d}.%where $H_d=1+1/2+1/3+\ldots+1/d$ is the Harmonic series.

\section{Inept state steering}

\subsection{Optimal ensemble}\label{IneptOptimalApp}

We need to show that the conditions for Lemma 1 hold for the ensemble defined by \erf{ineptoptimal}.  That is, we need to show that \erf{ineptoptimal} defines an optimal ensemble.  In this instance $G$ is  the group generated by $(1/2)\hat{\sigma}_z\otimes\mathbf{I}-(1/2)\mathbf{I}\otimes\hat{\sigma}_z$, so $g\rightarrow \phi \in \left[ 0,2\pi\right)$ and 
\beq
\hat{U}_{\alpha\beta}(\phi) = \exp\left[-i\phi\hat{\sigma}_z/2\right]\otimes\exp\left[i\phi\hat{\sigma}_z/2\right].
\eeq
For the particular measurement strategy chosen we need to consider only two types of measurement $\hat{\sigma}_z$ and $\hat{\sigma}_\theta$. The condition $\hat{U}_\alpha\dg(\phi)\hat{A}\hat{U}_\alpha(\phi) \in {\mathfrak M}_\alpha$ clearly holds for $\hat{A}=\hat{\sigma}_z$ since 
\beq \exp\left(i\phi\hat{\sigma}_z\right)\hat{\sigma}_z\exp\left(-i\phi\hat{\sigma}_z\right)=\hat{\sigma}_z.\eeq Therefore \erf{lem1} holds trivially. 

Now it simply remains to test these conditions for measurements of the form of $\hat{A}=\hat{\sigma}_\theta$.  In this case we have 
\bqa
\hat{U}_\alpha\dg(\phi)\hat\sigma_\theta\hat{U}_\alpha(\phi)
&=& \exp\left(i\phi\hat{\sigma}_z\right) \hat{\sigma}_\theta
 \exp\left(-i\phi\hat{\sigma}_z\right) \nonumber\\
 &=& \cos\left(\theta-\phi \right)\hat{\sigma}_x+\sin\left(\theta-\phi \right)\hat{\sigma}_y\nonumber \\
 &=&  \hat{\sigma}_{\theta-\phi},
 \eqa
which is in $\mathfrak{M}_\alpha$.
%\bqa \exp\left(i\phi\hat{\sigma}_z\right)&\hat{\sigma}_\theta& \exp\left(-i\phi\hat{\sigma}_z\right)\nonumber\\&=&\cos\left(\theta-\phi \right)\hat{\sigma}_x+\sin\left(\theta-\phi \right)\hat{\sigma}_y\nonumber\\
%&=& \hat{\sigma}_{\theta-\phi}.\eqa
%\hat{\sigma}_{-\theta}\in \mathfrak{M}_\alpha.\eqa  
%That is, we simply have $\theta\rightarrow \theta-\phi$ and hence $\hat{\sigma}_{\theta-\phi}\in \mathfrak{M}_\alpha$. 
Thus to test \erf{lem1} we need to evaluate
$\tilde{\rho}_a^{\sigma_{\theta-\phi}}$. Using \erf{sigphi} we can see that this simply evaluates to
\bqa
\tilde{\rho}_a^{\sigma_{\theta-\phi}}&=&\frac{1}{2}\left[\mathbf{I}+a \eta\sqrt{\epsilon(1-\epsilon)}\cos(\theta-\phi)\hat{\sigma}_x \right.\nonumber\\
&&\phantom{\frac{1}{2}}\left.+a \eta\sqrt{\epsilon(1-\epsilon)}\sin(\theta-\phi)\hat{\sigma}_y 
- (1-2\epsilon)\hat{\sigma}_z \right]. \nonumber\\ \eqa
Finally we evaluate
\bqa
\hat{U}_\beta(\phi) \tilde{\rho}_a^{\sigma_\theta} \hat{U}_\beta\dg(\phi)
&=& \exp\left(i\phi\hat{\sigma}_z\right)\tilde{\rho}_a^{\sigma_\theta}\exp\left(-i\phi\hat{\sigma}_z\right)\nonumber\\
&=&\frac{1}{2}\left(\begin{matrix}e^{-i(\theta-\phi)}
2\epsilon &   \kappa e^{-i(\theta-\phi)} \\
\kappa e^{i(\theta-\phi)}  & 2(1-\epsilon )
\end{matrix} \right)\nonumber\\
&=& \tilde{\rho}_a^{\sigma_{\theta-\phi}},\eqa
%\hat{\sigma}_{-\theta}\in \mathfrak{M}_\alpha.\eqa  
where $\kappa=a\eta\sqrt{\epsilon(1-\epsilon)}$. Thus \erf{lem1} also holds for measurements of $\hat{\sigma}_\theta$. Hence, the conditions of Lemma 1 hold  and the ensemble defined by \erf{ineptoptimal} is of the form of the optimal ensemble.

\subsection{Steering bound for $\epsilon=1/2$}\label{App:IneptHalf}

We know that for $d=2$ the Werner and isotropic states are equivalent. Now consider the inept states when $\epsilon=1/2$. In this case \erf{PESmatrix} becomes
\beq W_{\frac{1}{2}}^\eta=\eta\ket{\psi}\bra{\psi}+(1-\eta)\frac{\bf{I}}{4},
\eeq where $\ket{\psi}=\frac{1}{\sqrt{2}}\left(\ket{0_\alpha 0_\beta}+\ket{1_\alpha 1_\beta}\right)$.  Comparing this with \erf{isotropicstates} for isotropic states (when $d=2$), one immediately sees that the expressions are identical.  Hence, for $\epsilon=1/2$ the inept states are equivalent to the $d=2$ isotropic (and Werner) states.

Setting $\epsilon=1/2$ in \erf{PESsteeringcond2} we find an upper bound of $0.5468$ on $\eta_{\rm steer}$. However, we know that the steering boundary for $d=2$ isotropic states occurs at $\eta=1/2$. Thus for $\epsilon=1/2$ we can do better than an upper bound on $\eta_{\rm steer}$ for inept states; due to the equivalence with isotropic states we know that the true $\eta_{\rm steer}$ occurs at $\eta=1/2$.  We plot this as a separate point at $\epsilon=1/2$ in Fig. \ref{combinedfigures} (c).

\section{Gaussian state steering}

\subsection{Proof of Lemma 2}\label{lemma2proof}

First, suppose \erf{LMIL} is true. Thus the matrix $V_{\alpha\beta} + {\bf 0}_\alpha\oplus i\Sigma_\beta$ is PSD. Now since \erf{LMIL} is assumed true, and we know that $V_\alpha\geq 0$, taking the Schur complement of $V_\alpha$ in \erf{LMIL} we see
\beq V_\beta+i\Sigma_\beta - C\tp V_\alpha^{-1}C  \geq 0,
\eeq which implies \erf{LMI1} where $U=V_\beta - C\tp V_\alpha^{-1}C$. This LMI allows us to define an ensemble $F^U=\{\rho_\xi^U\wp^U_\xi\}$ of Gaussian states with CM$[\rho^U_\xi]=U$, distinguished by their mean vectors ($\xi$). 
  
Now we wish to see if the ensemble $U$ defined above could be used to simulate Bob's conditioned state $V_\beta^A$. This will be the case \emph{iff} $V_\beta^A-U$ is PSD as explained below. Evaluating this matrix we see that 
\bqa V_\beta^A-U&=&V_\beta -  C\tp(V_\alpha+T^A)^{-1}C-V_\beta +  C\tp V_\alpha^{-1}C\nonumber\\
&=&C\tp\left[V_\alpha^{-1}-(V_\alpha+T^A)^{-1}\right]C.
\eqa Both $C$ and $C\tp$ are positive matrices, so the above expression is PSD if and only if the bracketed term is PSD.  To prove this is so, we make use of the Woodbury formula, which can be expressed as
\beq X^{-1}-\left(X+YZ\tp\right)^{-1}=X^{-1}Y\left(I+Z\tp X^{-1}Y\right)^{-1}Z\tp X^{-1}\label{Woodbury}.
\eeq Thus to check the positivity of $V_\alpha^{-1}-(V_\alpha+T^A)^{-1}$ we set $X=V_\alpha$, $Y=\sqrt{T^A}$ and $Z\tp=\sqrt{T^A}$, and thus
\bqa V_\alpha^{-1}&-&(T^A+V_\alpha)^{-1}\nonumber\\%&=&V_\alpha^{-1}-\left(V_\alpha^{-1}-\left[V_\alpha^{-1}\sqrt{T^A}\{I+\sqrt{T^A}V_\alpha^{-1}\sqrt{T^A}\}^{-1}\sqrt{T^A}V_\alpha^{-1}\right]\right)\nonumber\\
&=&V_\alpha^{-1}\sqrt{T^A}\left(I+\sqrt{T^A}V_\alpha^{-1}\sqrt{T^A}\right)^{-1}\sqrt{T^A}V_\alpha^{-1}.\nonumber\\
\eqa
 Now the covariance matrices $V_\alpha$ and $T^A$ are positive by definition, so their inverse and square root respectively must also be positive matrices.  Since the product and the sum of two positive matrices is PSD, the above expression is PSD if and only if $\sqrt{T^A}V_\alpha^{-1}\sqrt{T^A}$ is PSD, which holds since any matrix $ABA^T$ is PSD whenever B is PSD.
%Now the covariance matrix $V_\alpha$ is positive by definition, so its inverse $V_\alpha^{-1}$ must also be a positive matrix.   Similarly, the covariance matrix $T^A$ is positive, so its square root $\sqrt{T^A}$ must be positive. So the above expression is PSD if and only if the bracketed term is. 
%%This term is the sum of identity (which is a positive matrix) and $\sqrt{T^A}V_\alpha^{-1}\sqrt{T^A}$ which is a product of positive matrices (as we just showed).  
%Since the product of positive matrices is a positive matrix and the sum of two positive matrices is PSD, $I+\sqrt{T^A}V_\alpha^{-1}\sqrt{T^A}$ is PSD. 
Thus the ensemble defined by $U=V_\beta - C\tp V_\alpha^{-1}C$ satisfies \erf{LMI2}
%\beq V_\beta^A-U\geq 0 \label{LMI2},
%\eeq 
which implies that $\forall A\in\mathfrak{G},\ \rho^A_a$ is a Gaussian mixture (over $\xi$) of Gaussian states $\rho_\xi^U$, all with the same covariance matrix $U$, but with different mean vectors $\xi$. Specifically, $\wp(a|\xi,A)$ is a Gaussian distribution in $\xi$ with a  
mean vector equal to $a$ (which is determined by Alice's measurement $A$ and the bipartite Gaussian state $W$), and a covariance matrix equal to $V_\beta^A-U$.  As long as 
$V_\beta^A-U \geq 0$, this distribution is well defined, so that Bob's state $\rho_a^A$ is consistent with Alice merely sending Bob Gaussian states drawn from an ensemble $F^U=\{\rho_\xi^U\wp^U_\xi\}$  in which all states have a 
CM equal to $U$, and with mean vectors $\xi$ having a Gaussian distribution $\wp^U_\xi$ which has a covariance matrix $V_\beta - U = C\tp V_\alpha^{-1}C \geq 0$. Therefore $W$ is not steerable by Alice for all measurements $A\in\mathfrak{G}$ if \erf{LMIL} is true. \rule{0.5em}{0.5em} %Thus, if ${\bf L}$ is satisfied then steering is not possible.

\subsection{Proof of Lemma 3}\label{lemma3proof}

Suppose that $W$ is not steerable. This means that there is some ensemble $F=\{\wp_\xi\rho_\xi\}$ which satisfies \erf{steering2}. Therefore we know that Bob's conditioned state can be written as 
\beq \rho_a^A=\frac{\sum_\xi \wp_\xi \wp(a|A,\xi)\rho_\xi}{\sum_\xi \wp_\xi\wp(a|A,\xi)}\label{rhoA}.
\eeq 
This means that the covariance matrix satisfies
\beq {\rm CM}\left[\rho_a^A\right]=V_\beta^A\geq \frac{\sum_\xi \wp_\xi \wp(a|A,\xi){\rm CM}\left[\rho_\xi\right]}{\sum_\xi \wp_\xi\wp(a|A,\xi)}\label{CovrhoA}, 
\eeq since the CM of a state equal to a weighted sum of states must be at least as great as the weighted sum of the individual CMs. The equality occurs \emph{iff} all the means are the same.
Rearranging and taking a sum over $a$ on both sides gives
\beq  \sum_{\xi,a} \wp_\xi\wp(a|A,\xi)V_\beta^A\geq\sum_{\xi,a} \wp_\xi \wp(a|A,\xi){\rm CM}\left[\rho_\xi\right].\label{U}
\eeq
From the fact that $V_\beta^A$ is  independent of $a$, and using the facts that $\sum_a \wp(a|A,\xi)=1$ and $\sum_\xi \wp_\xi=1$, one sees that \erf{U} simplifies to
\beq V_\beta^A \geq \sum_{\xi} \wp_\xi {\rm CM}\left[\rho_\xi\right].\label{LMI2a}
\eeq
Defining $U=\sum_{\xi} \wp_\xi {\rm CM}\left[\rho_\xi\right]$ satisfies \erf{LMI1} by definition and \erf{LMI2a} implies \erf{LMI2}. Therefore if $W$ is not steerable then there exists an ensemble $U$ such that \erfs{LMI1}{LMI2} are true. \rule{0.5em}{0.5em}\\

\subsection{Proof of Lemma 4}\label{lemma4proof}

\erf{LMI2} defines the Schur complement of $V_\alpha+T^A$ in the following matrix
\beq M=\left(\begin{matrix}
V_\alpha+T^A &   C \\
C\tp  & V_\beta-U
\end{matrix} \right).
\eeq
Therefore, since $V_\alpha+T^A\geq 0$ (recall that we are considering Gaussian measurements), \erf{LMI2} is equivalent to the condition that the matrix $M$ be PSD. Now we know that the sum of two PSD matrices is PSD, so if $M\geq 0$ and using \erf{LMI1} we arrive at %$M+0_\alpha\oplus\left(U+i\Sigma\right)\geq 0$
\beq M+0_\alpha\oplus(U+i\Sigma_\beta)=\left(\begin{matrix}
V_\alpha+T^A &   C \\
C\tp  & V_\beta+i\Sigma
\end{matrix} \right)
\geq 0,
\eeq as an equivalent condition to \erfs{LMI1}{LMI2}. Finally, we know that $V_\beta+i\Sigma_\beta \geq 0$, so the Schur complement of this term in the above matrix must be PSD. \rule{0.5em}{0.5em}

\bibliography{SteeringPRA3}
\end{document}